\newtheorem{theorem}{Theorem}
\newtheorem{lemma}{Lemma}
\newtheorem{corollary}[theorem]{Corollary}
\newcommand{\bC}{\mathbb{C}}
\newcommand{\bG}{\mathbb{G}}
\newcommand{\bJ}{\mathbb{J}}
\newcommand{\bK}{\mathbb{K}}
\newcommand{\bT}{\mathbb{T}}
\newcommand{\bZ}{\mathbb{Z}}
\newcommand{\sH}{\mathscr{H}}
\newcommand{\Vt}{ {V_{t,\kappa}} }
\newcommand{\Wt}{ {W_{t,\kappa}} }
\newcommand{\Xt}{ {X_t} }
\newcommand{\Xtr}{ {X_{t,r}} }
\newcommand{\Wtd}{ {W_{t,\kappa}^\prime} }
\newcommand{\Xtd}{ {X_t^*} }
\newcommand{\Kk}{ {\bK_\kappa} }
\newcommand{\Jk}{ {\bJ_\kappa} }
\newcommand{\tc}{\tilde{c}}
\newcommand{\lequiv}{=}
\newcommand{\requiv}{=}
\newcommand{\graph}{a graph with joined half lines}
\renewcommand{\Re}{{\rm Re}}
\newcommand{\ket}[1]{|{#1}\rangle}
\newcommand{\bra}[1]{\langle{#1}|}
\newcommand{\bi}[1]{\ensuremath{\boldsymbol{#1}}}
\newcommand{\er}[1]{\epsilon_{#1}}
\newcommand{\erd}[1]{ {\epsilon_{#1}^\prime} }
\newcommand{\sr}[1]{ {\sigma_{#1}} }
\newcommand{\hr}[1]{ {h_{#1} } }
\title{{\Large {\bf Limit theorems for the discrete-time quantum walk \\ on a graph with joined half lines  
}
}}
\author{ 
{\small 
Kota Chisaki,$^{1}$ 
\footnote{
chisaki-kota-jm@ynu.ac 
}\quad 
Norio Konno,$^{2}$ 
\footnote{konno@ynu.ac.jp 
}\quad  
Etsuo Segawa,$^{3}$ 
\footnote{To whom correspondence should be addressed. 
segawa@stat.t.u-tokyo.ac.jp 
}\quad  
}\\ 
{\scriptsize $^{1,2}$ 
Department of Applied Mathematics, Faculty of Engineering, Yokohama National University
}\\
{\scriptsize Hodogaya, Yokohama 240-8501, Japan
} \\
{\scriptsize $^3$ 
Department of Mathematical Informatics, The University of Tokyo,
}\\
{\scriptsize Bunkyo, Tokyo, 113-8656, Japan
}
} 
\date{\empty }
\begin{document}
\maketitle

\par\noindent
\begin{small}
\par\noindent
{\bf Abstract}. 
We consider a discrete-time quantum walk $W_{t,\kappa}$ at time $t$ on a graph with joined half
lines $\mathbb{J}_\kappa$, which is composed of $\kappa$ half lines with the same origin. Our analysis is based on
a reduction of the walk on a half line. The idea plays an important role to analyze the
walks on some class of graphs with \textit{symmetric} initial states. In this paper, we introduce
a quantum walk with an enlarged basis and show that $W_{t,\kappa}$ can be reduced to the walk
on a half line even if the initial state is \textit{asymmetric}. For $W_{t,\kappa}$, we obtain two types of
limit theorems. The first one is an asymptotic behavior of $W_{t,\kappa}$ which corresponds to
localization. For some conditions, we find that the asymptotic behavior oscillates. The
second one is the weak convergence theorem for $W_{t,\kappa}$. On each half line, $W_{t,\kappa}$ converges
to a density function like the case of the one-dimensional lattice with a scaling order of $t$. 
The results contain the cases of quantum walks starting from the general initial state 
on a half line with the general coin and homogeneous trees with the Grover coin.
\footnote[0]{
{\it Key words.} 
 quantum walk, localization, weak convergence, homogeneous tree
}

\end{small}

\setcounter{equation}{0}

\section{Introduction}
\label{intro}
Random walks have a very important role in various fields, such as physical systems, mathematical
modeling and computer algorithms. In 1990s, quantum walks arise as a quantum
counterpart of random walks \cite{aharonov01,meyer01,farhi01}.
They are defined by unitary evolutions of probability
amplitudes, whereas random walks are obtained by evolutions of probabilities by transition
matrices. Discrete-time quantum walks are introduced by Refs. \cite{aharonov01,meyer01}.
In recent years, quantum walks have been well developed in fields of quantum algorithms, for example \cite{childs01,shenvi01,ambainis01}.
On the other hand, studies of the walks from the mathematical point of view also arise.
Especially, as a limiting behaver, localization appears in quantum cases \cite{inui03,chisaki01,konno04,shikano01,cantero01,cantero02}.
Furthermore the quantum walk has a quadratically faster scaling order than
the random walk in the weak convergence \cite{konno01,konno02,grimmett01,miyazaki01,segawa01}.
Cantero et al. introduced an analysis using the CMV matrix \cite{cantero01,cantero02}. This method is very useful to consider localization.
To analyze the quantum walk, we use the
generating function. By using the generating function, we can compute not only localization
but also the weak convergence of the walk. A reduction technique \cite{krovi01,tregenna01,carneiro01}, which reduces
the walk to a one-dimensional quantum walk, is very important to apply a path counting
method \cite{konno01,konno02,oka01} which gives an explicit expression for the generating function. To treat the
quantum walk with asymmetric initial states, we introduce a quantum walk with enlarged bases.

Our main results are two limit theorems for the quantum walk $\Wt$ on \graph\ 
with \textit{arbitrary} initial state starting from the origin.
In case of $\kappa=1$, $W_{t,1}$ corresponds to a quantum walk on a half line with the general coin.
Furthermore, by considering the reduction of the walks, 
the two limit theorems can be adopted to quantum walks on homogeneous trees and semi-homogeneous trees with the Grover coin operator.
One of two our main results is the explicit expression for the limit probability of $\Wt$. 
It is corresponding to localization which is defined that there exists
a vertex of the graph $x$ such that $\limsup_{t\to\infty}P(\Wt=x)>0$.   
We find that, for some conditions, the asymptotic behavior oscillates.
Same as other results on quantum walks \cite{inui03,chisaki01,konno04}, localization has an exponential decay for position $x$ on each half line.
Another main result is the weak convergence of $\Wt$.
On each half line, $\Wt$ has a scaling order $t$.
Moreover the limit measure has a typical density
function which appears on other quantum walks \cite{chisaki01,konno01,konno02,grimmett01,miyazaki01,segawa01}.

For related works, Chisaki et al.~\cite{chisaki01} obtained the same type of limit theorems for a quantum walk on homogeneous trees with two special initial states.
This result induces limit theorems for a quantum walk on a half line with a special coin operator.
Konno and Segawa \cite{segawa02} showed localization of quantum walks on a half line by using the spectral analysis of the corresponding CMV matrices.

The remainder of the present paper is organized as follows. In Section 2, we give definitions
of discrete-time quantum walks treated in this paper. Section 3 presents our results. Section
4 gives proofs of our main theorems. In Subsection 4.1, we introduce a quantum walk with
an enlarged basis and reduce $W_{t,\kappa}$ to the walk on a half line. Subsection 4.2 presents a
proof of Theorem 1 based on the generating function. Subsection 4.3 is devoted to a proof of
Theorem 2 using the Fourier transform of the generating function. In Appendix, we compute
the generating function.
\section{Discrete-time quantum walks} \label{sec:discrete-time quantum walks} 
This section gives the definition of the quantum walk on undirected connected graph $\bG$. 
Let $V(\bG)$ be a set of all vertices in $\bG$ and $E(\bG)$ be a set of all edges in $\bG$.
Here we define $E_x(\bG)\subset E(\bG)$ as a set of all edges which connect the vertex $x\in V(\bG)$. 
Now we take a Hilbert space spanned by an orthonormal basis $\{ \ket{x};\ x\in V(\bG) \}$ as a position space $\sH_p$
and a Hilbert space generated by an orthonormal basis $\{ \ket{l};\ l\in E_x(\bG) \}$ for $x\in V(\bG)$ as a \textit{local} coin space $\sH_{c_x}$.
A discrete-time quantum walk on $\bG$ is defined on a Hilbert space $\sH$ spanned by an orthonormal basis $\{\ket{x,l};\ x\in V(\bG),\ l\in E_x(\bG) \}$.
Note that if we take $\bG$ as a regular graph, $\sH$ can be written as $\sH=\sH_p\otimes \sH_{c_x}$ for any $x\in V(G)$.
On the space $\sH$, the evolution operator $U$ is given by $U=SF$,
where $S:\sH\to\sH$ is a shift operator and $F:\sH\to\sH$ is a coin operator.
Here we define $F=\sum_{x\in V(\bG)}\ket{x}\bra{x}\otimes C_x$ as a coin operator and 
$C_x:\sH_{c_x} \to \sH_{c_x}$ for $x\in V(\bG)$ as a \textit{local} coin operator.
If the graph is regular and the local coin operator is all the same, we can rewrite the coin operator as $F=I_p\otimes C$, where $I_p$ is the identity operator on $\sH_p$.
As typical local coin operators, the Hadamard operator $H$ and the Grover operator $G_d$ are often used,
where $H$ and $G_d$ $(d\geq2)$ are defined by
\begin{align*}
H =& \frac{1}{\sqrt{2}}
\left[ 
\begin{array}{cc}
1&1\\
1&-1\\
\end{array}
\right] ,\\
G_d =&  
\left[ 
\begin{array}{cccc}
a_d&b_d&\cdots&b_d\\
b_d&a_d&\cdots&b_d\\
\vdots&\vdots&\ddots&\vdots\\
b_d&b_d&\cdots&a_d\\
\end{array}
\right]
\lequiv
\left[ 
\begin{array}{cccc}
\frac{2}{d}-1&\frac{2}{d}&\cdots&\frac{2}{d}\\
\frac{2}{d}&\frac{2}{d}-1&\cdots&\frac{2}{d}\\
\vdots&\vdots&\ddots&\vdots\\
\frac{2}{d}&\frac{2}{d}&\cdots&\frac{2}{d}-1\\
\end{array}
\right].
\end{align*}
In this paper we define $a_1=1$, $b_1=2$ and $G_1=1$.
From the construction, the state at time $t$ and position $x$ is described as
\begin{eqnarray}
\Psi_t(x) &=& \sum_{l\in E_x(\bG)} \alpha_t(x,l) \ket{x,l} , \label{eq:alpha}
\end{eqnarray}
where $\alpha_t(x,l)\in\bC$ is the amplitude of the base $\ket{x,l}$ at time $t$ and $\bC$ is the set of all complex numbers.
The probability of the state is given by a square norm of $\Psi_t(x)$, i.e., $\|\Psi_t(x)\|^2 = \sum_{l\in E_x(\bG)}|\alpha_t(x,l)|^2$.
We only consider the initial state starting from the origin ``$o$'' with the state $\Psi_0(o)$ such that $\|\Psi_0(o)\|=1$. 
\subsection{Quantum walk on \graph} \label{sec:quantum walk on \graph}
This subsection gives the definition of \graph\ $\Jk$ and the quantum walk $\Wt$ on $\Jk$.
Let $\Kk = \{0,1,\ldots,\kappa-1\}$ and $\bZ_r \lequiv \{\hr{r}(1),\hr{r}(2),\ldots\}$ for $r\in\Kk$, 
we define $V(\mathbb{J}_\kappa)=\{0\}\cup \{\cup_{j\in \mathbb{K}_\kappa}\mathbb{Z}_j\}$. 
A vertex $h_i(x)$ connects $h_j(y)$ if and only if $|x-y|=1$ with $i=j$, and the origin $0$ connects $h_r(1)$ for any $r$ 
(see Fig. \ref{fig:bhlmat} (a) for example).

The quantum walk on $\Jk$ is defined on $\sH^{(\Jk)}$ which is a Hilbert space spanned by an orthonormal basis 
$\{ \ket{0,l};\ l\in\{ \er{0},\er{1},\ldots,\er{\kappa-1} \} \}\cup\{\ket{x,l};\ x\in V(\Jk)\setminus \{0\},\ l\in\{Up,Down\} \}$.
Throughout this paper, we put the base $\ket{\er{r}}$ as 
${}^T[ \overbrace{0 \cdots 0}^{r} 1 \overbrace{0 \cdots 0}^{\kappa-r-1} ],$
where $T$ is the transposed operator. 
We define a local coin operator $C$ as 
\begin{equation} C=\begin{bmatrix} a & b \\ c & d \end{bmatrix}\in U(2)\mathrm{\;with}\;abcd\neq 0, \end{equation}
where $U(d)$ is the set of $d\times d$ unitary matrices. 
The coin operator $F_J$ is given by 
\begin{eqnarray}\label{mon}
F_J \lequiv \ket{0}\bra{0} \otimes G_\kappa + \sum_{x\in V(\Jk)\setminus \{0\}} \ket{x}\bra{x} \otimes C , \label{eq:FBk}
\end{eqnarray}
where $G_\kappa$ is the Grover operator.
The shift operator $S_J$ is given by
\begin{eqnarray*}
S_J\ket{0,l} &=& \ket{\hr{r}(1),Down} \ ,\ l = \er{r} ,\\
S_J\ket{\hr{r}(1),l} &=& \left\{ 
\begin{array}{ll}
\ket{0,\er{r}} , & l = Up ,\\
\ket{\hr{r}(2),Down} , & l = Down ,\\
\end{array} \right. \\
S_J\ket{\hr{r}(x),l} &=& \left\{ 
\begin{array}{ll}
\ket{\hr{r}(x-1),Up} , & l = Up ,\\
\ket{\hr{r}(x+1),Down} , & l = Down ,\\
\end{array} \right.  \  x\geq 2 .
\end{eqnarray*}
Then the evolution operator of the walk $U_J$ is obtained by $U_J=S_JF_J$. 
An expression of $W_{t,3}$ using weights is shown in Fig. \ref{fig:bhlmat} (a), where
\begin{eqnarray*}
P_1^{\er{1}} = 
\left[ 
\begin{array}{cc}
a&b\\
0&0\\
0&0\\
\end{array}
\right], && 
Q_0^{\er{1}} = 
\left[ 
\begin{array}{ccc}
0&0&0\\
-\frac{1}{3}&\frac{2}{3}&\frac{2}{3}\\
\end{array}
\right] , \\
P_1^{\er{2}} =
\left[ 
\begin{array}{cc}
0&0\\
a&b\\
0&0\\
\end{array}
\right], && 
Q_0^{\er{2}} = 
\left[ 
\begin{array}{ccc}
0&0&0\\
\frac{2}{3}&-\frac{1}{3}&\frac{2}{3}\\
\end{array}
\right] , \\
P_1^{\er{3}} =
\left[ 
\begin{array}{cc}
0&0\\
0&0\\
a&b\\
\end{array}
\right], && 
Q_0^{\er{3}} = 
\left[ 
\begin{array}{ccc}
0&0&0\\
\frac{2}{3}&\frac{2}{3}&-\frac{1}{3}\\
\end{array}
\right] , \\
P =
\left[ 
\begin{array}{cc}
a&b\\
0&0\\
\end{array}
\right], && 
Q =
\left[ 
\begin{array}{cc}
0&0\\
c&d\\
\end{array}
\right] .
\end{eqnarray*}
Note that $W_{t,1}$ is a quantum walk on a half line with a reflecting wall.
\begin{figure}[h]
  \begin{center}
    \begin{tabular}{ll}
      (a)&\ \ \ \ (b)\\
      \resizebox{0.3\textwidth}{!}{\includegraphics{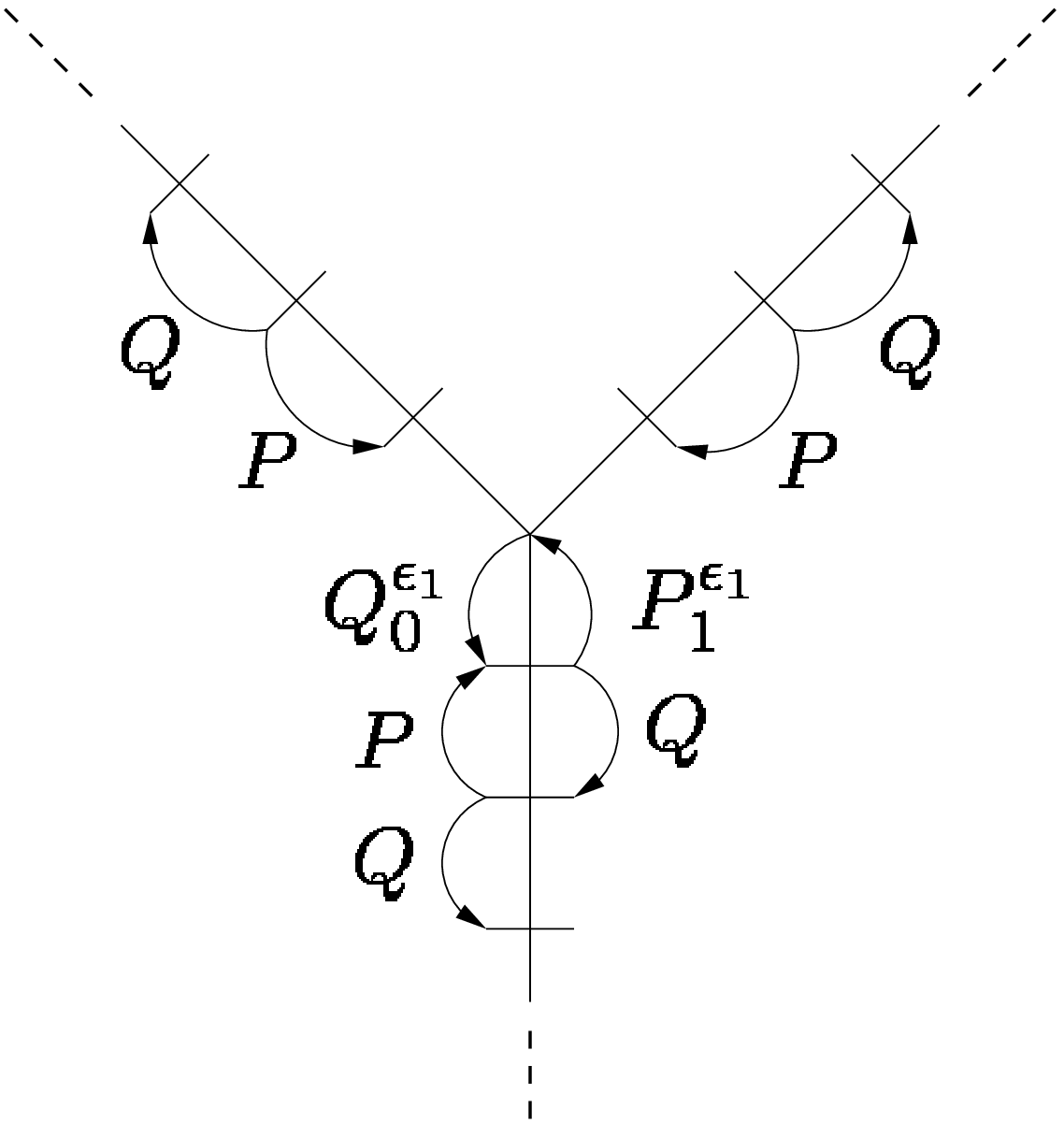}}&
      \ \ \ \ \resizebox{0.3\textwidth}{!}{\includegraphics{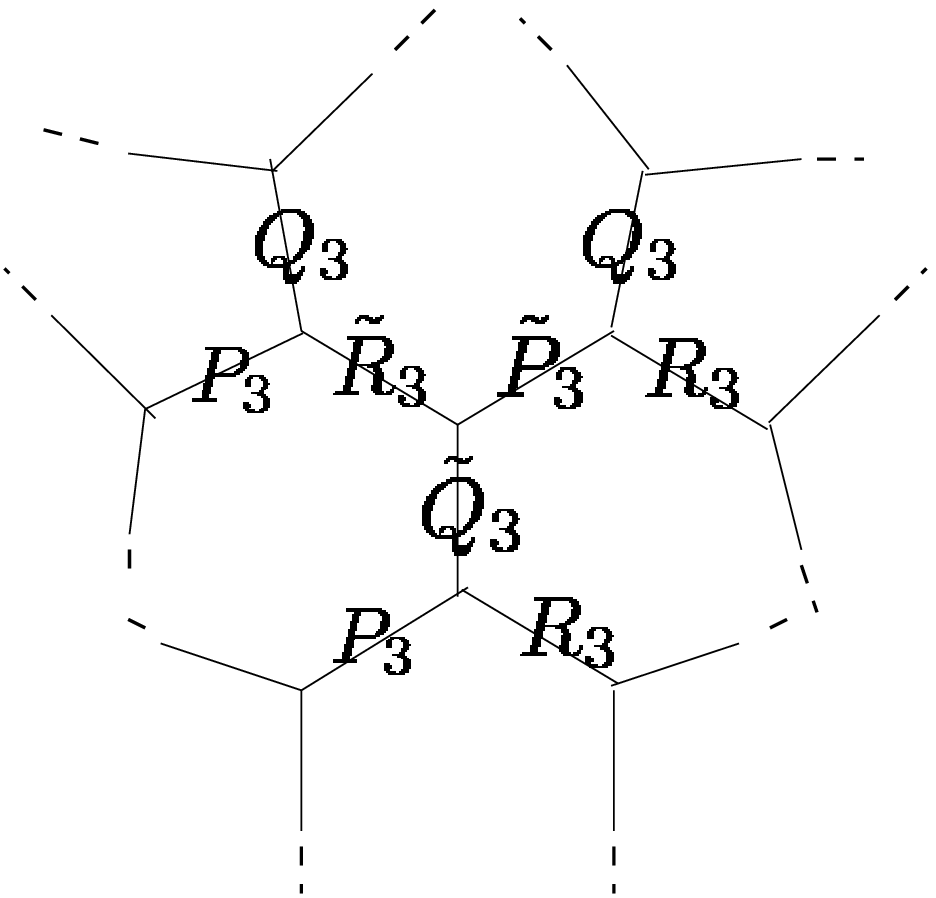}}
    \end{tabular}
  \caption{(a) Quantum walk on $\bJ_3$, (b) Quantum walk on $\mathbb{T}_3$}
  \label{fig:bhlmat}
  \end{center}
\end{figure}
%
\subsection{Quantum walk on homogeneous trees} \label{sec:quantum walk on homogeneous trees}
We define a homogeneous tree $\bT_\kappa$ and a quantum walk $\Vt$ on $\bT_\kappa$.
Fix $\kappa \geq 2$, let $\Sigma \lequiv \{\sr{0},\sr{1},\ldots,\sr{\kappa-1}\}$ be the set of generators subjected to the relation
$\sr{j}^2 = e$ for $j\in\Kk$, where the empty word $e$ is the unit of this group. 
Then we put 
$V(\bT_\kappa) \lequiv \{e\}\cup\{\sr{i_n}\ldots \sr{i_2} \sr{i_1} : n\geq 1,\ \sr{i_j}\in \Sigma, \ i_{j+1} \neq i_j \ \mathrm{for}\ j=1,2,\ldots,n-1 \}$. 
Here vertices $g$ and $h$ are connected if and only if $gh^{-1} \in \Sigma$.
On this graph, $\sH_p^{(\bT_\kappa)}$ is generated by an orthonormal basis $\{\ket{g};\ g\in V(\bT_\kappa) \}$ 
and $\sH_c^{(\bT_\kappa)}$ is associated with an orthonormal basis $\{\ket{\sr{j}};\ \sr{j} \in \Sigma \}$.
We choose $G_\kappa$ as the local coin operator, then the coin operator
$F_T$ and the shift operator $S_T$ are defined as follows: for $\sigma\in\Sigma$
\begin{eqnarray*}
&& F_T \lequiv \ket{e}\bra{e} \otimes \tc G_\kappa + \sum_{g\in V(\bT_\kappa)\setminus \{e\}} \ket{g}\bra{g} \otimes G_\kappa ,\\
&& S_T\ket{g,\sigma} = \ket{\sigma g,\sigma} ,
\end{eqnarray*}
where we put $\ket{\sr{r}}$ as $ {}^T[ \overbrace{0 \cdots 0}^{r} 1 \overbrace{0 \cdots 0}^{\kappa-r-1} ]$ and $\tc\in \mathbb{C}$ with $|\tc|=1$. 
The phase $\tc$ works as a defect on the origin, which is an extension of our model in \cite{chisaki01}. 
An expression of $V_{t,3}$ using weights is shown in Fig. \ref{fig:bhlmat} (b), where
\begin{eqnarray*}
&& P_3 = 
\left[ 
\begin{array}{ccc}
-\frac{1}{3}&\frac{2}{3}&\frac{2}{3}\\
0&0&0\\
0&0&0\\
\end{array}
\right], \ 
Q_3 = 
\left[ 
\begin{array}{ccc}
0&0&0\\
\frac{2}{3}&-\frac{1}{3}&\frac{2}{3}\\
0&0&0\\
\end{array}
\right], \ 
R_3 = 
\left[ 
\begin{array}{ccc}
0&0&0\\
0&0&0\\
\frac{2}{3}&\frac{2}{3}&-\frac{1}{3}\\
\end{array}
\right] ,
\end{eqnarray*}
and $\tilde{P}_3=\tc P_3$, $\tilde{Q}_3=\tc Q_3$, $\tilde{R}_3=\tc R_3$. 

In the case of the one point initial state on the origin, $V_{t,\kappa}$ can be reduced to the equivalent
walk on $\mathbb{J}_\kappa$ even if the initial state is not symmetric. 
To explain it, we define subgraph $\mathbb{T}_{\kappa}^{(r)}\in \mathbb{T}_\kappa$ as 
$V (\mathbb{T}_\kappa^{(r)}) 
= \{\sigma_{i_n}\cdots \sigma_{i_2}\sigma_{i_1}: 
n\geq 1,\;\sigma_{i_j}\in \Sigma,\;\sigma_{i_1} = \sigma_{r},\;i_{j+1} \neq i_j\;\mathrm{for}\;j = 1,2,\dots,n-1 \}$. 
They are subtrees whose roots are the children of the root of $\mathbb{T}_\kappa$. 
Now we consider the following new basis, for $x\geq 1$,
\begin{align*}
|x,Up\rangle_{\sigma_r} &= \frac{1}{\sqrt{(\kappa-1)^{x-1}}}
	\sum_{\scriptsize{\begin{matrix}g\in V(\mathbb{T}_\kappa^{(r)}) \\ |g|=x\end{matrix}}} \sum_{\sigma_j:|\sigma_jg|=x-1} |g,\sigma_j\rangle, \\
|x,Down\rangle_{\sigma_r} &= \frac{1}{\sqrt{(\kappa-1)^x}}
	\sum_{\scriptsize{\begin{matrix}g\in V(\mathbb{T}_\kappa^{(r)}) \\ |g|=x\end{matrix}}} \sum_{\sigma_j:|\sigma_jg|=x+1} |g,\sigma_j\rangle.
\end{align*}
The new space ${\sH^{(\mathbb{T}_\kappa)}}'$ spanned by a basis 
$\{|e,l\rangle:l\in \Sigma\}\cup \{|x,l\rangle_{\sigma_r}: r\in \mathbb{K}_\kappa,x\in \mathbb{Z}_+,l\in \{Up,Down\}\}$ is isomorphic to 
${\sH^{(\mathbb{J}_\kappa)}}$ under the following one-to-one correspondence
\begin{align}
|x,l\rangle_{\sigma_r} &\leftrightarrow  |h_r(x),l\rangle\;\mathrm{for}\;l\in\{Up,Down\},\;x\geq 1, \notag  \\
|e,\sigma_r\rangle    &\leftrightarrow  |0,\epsilon_r\rangle, \label{nagoya}
\end{align}
where $\mathbb{Z}_+=\{1,2,\dots\}$. Then the direct computation gives the following lemma. 
\begin{lemma}[Homogeneous tree] \label{lem:homogeneous tree}
The subspace of ${\sH^{(\mathbb{T}_\kappa)}}'$ is invariant under the action of the time evolution of $V_{t,\kappa}$. 
In particular, when we take the bijection from ${\sH^{(\mathbb{T}_\kappa)}}'$ to $\sH^{(\mathbb{J}_\kappa)}$ 
given by Eq. (\ref{nagoya}), the walk is equivalent to $W_{t,\kappa}$ with the following coin operator
\begin{eqnarray*}
F_J^{(\mathbb{T}_\kappa)} = \ket{0}\bra{0} \otimes \tc G_\kappa + \sum_{x\in V(\Jk)\setminus \{0\}} \ket{x}\bra{x} 
	\otimes 
\left[ 
\begin{array}{cc}
a_\kappa&\sqrt{\kappa-1}b_\kappa\\
\sqrt{\kappa-1}b_\kappa& -a_\kappa\\
\end{array}
\right] . 
\end{eqnarray*}
\end{lemma}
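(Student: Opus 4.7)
The plan is to verify both claims by computing directly how $S_T F_T$ acts on the new basis and matching the resulting matrix elements with those of $W_{t,\kappa}$ under (\ref{nagoya}). The core of the argument is a local two-dimensional reduction of the Grover coin $G_\kappa$. For each $g\in V(\bT_\kappa)\setminus\{e\}$ with $|g|=x\geq 1$, let $\sigma^{\uparrow}(g)$ denote the unique generator with $|\sigma^{\uparrow}(g)\,g|=x-1$ and set $|\mathrm{D}_g\rangle = \frac{1}{\sqrt{\kappa-1}}\sum_{\sigma:|\sigma g|=x+1}|\sigma\rangle$. Using the identity $a_\kappa + (\kappa-2)\,b_\kappa = -a_\kappa$, a short computation shows that $G_\kappa$ preserves the two-dimensional subspace $\mathrm{span}\{|\sigma^{\uparrow}(g)\rangle,|\mathrm{D}_g\rangle\}$ and, in this ordered basis, acts as
\[
\left[\begin{array}{cc} a_\kappa & \sqrt{\kappa-1}\,b_\kappa \\ \sqrt{\kappa-1}\,b_\kappa & -a_\kappa \end{array}\right],
\]
which is precisely the local coin of $F_J^{(\bT_\kappa)}$ away from the origin.

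Given this reduction, I would first verify that $F_T$ preserves ${\sH^{(\bT_\kappa)}}'$. After rewriting $|x,\mathrm{Down}\rangle_{\sigma_r} = \frac{1}{\sqrt{(\kappa-1)^{x-1}}}\sum_{|g|=x,\,g\in\bT_\kappa^{(r)}}|g,\mathrm{D}_g\rangle$, the pair $|x,\mathrm{Up}\rangle_{\sigma_r}$ and $|x,\mathrm{Down}\rangle_{\sigma_r}$ is built termwise from the local two-dimensional subspaces on which $G_\kappa$ has already been reduced; hence each such pair is $F_T$-invariant and $F_T$ acts on it by the stated $2\times 2$ matrix. At the root, $\tc G_\kappa$ on $\mathrm{span}\{|e,\sigma_j\rangle : j\in\Kk\}$ pulls back along (\ref{nagoya}) to $\tc G_\kappa$ on $\mathrm{span}\{|0,\epsilon_j\rangle : j\in\Kk\}$. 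For the shift, I would expand $S_T|x,\mathrm{Up}\rangle_{\sigma_r}$ and $S_T|x,\mathrm{Down}\rangle_{\sigma_r}$ by regrouping the terms $|\sigma g,\sigma\rangle$ according to the image vertex $\sigma g$, using that each level-$(x-1)$ vertex of $\bT_\kappa^{(r)}$ has $\kappa-1$ children at level $x$ while each level-$x$ vertex has a unique parent. These inner sums collapse to symmetric basis vectors at levels $x\pm 1$, reproducing the action of $S_J$ on $\bJ_\kappa$ together with the boundary case $|e,\sigma_r\rangle\leftrightarrow|0,\epsilon_r\rangle$.

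Combining the two computations yields both the invariance of ${\sH^{(\bT_\kappa)}}'$ and the equivalence of $V_{t,\kappa}$ restricted to this subspace with $W_{t,\kappa}$ for the coin $F_J^{(\bT_\kappa)}$. The main delicacy I expect in the write-up is the combinatorial bookkeeping in the shift: the normalizations $1/\sqrt{(\kappa-1)^{x-1}}$ and $1/\sqrt{(\kappa-1)^{x}}$ attached to the two level-$x$ basis vectors involve different powers of $\kappa-1$, and they must telescope against the branching factor of $\bT_\kappa^{(r)}$ so that $S_T$ sends each symmetric basis vector to a single symmetric basis vector with the correct coefficient; one must also track the $\mathrm{Up}/\mathrm{Down}$ labels carefully on the two sides of (\ref{nagoya}), since the generator that plays the role of parent direction at $g$ becomes a child direction at the image $\sigma g$.
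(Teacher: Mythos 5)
Your reduction of the Grover coin is correct and is surely the ``direct computation'' the paper has in mind (the paper prints no proof): with $a_\kappa+(\kappa-2)b_\kappa=-a_\kappa$ one checks that $G_\kappa$ maps $\mathrm{span}\{\ket{\sigma^{\uparrow}(g)},\ket{\mathrm{D}_g}\}$ into itself with matrix $\bigl[\begin{smallmatrix}a_\kappa& \sqrt{\kappa-1}\,b_\kappa\\ \sqrt{\kappa-1}\,b_\kappa&-a_\kappa\end{smallmatrix}\bigr]$, your rewriting of $\ket{x,Down}_{\sigma_r}$ and the normalization bookkeeping are right, and the invariance of ${\sH^{(\mathbb{T}_\kappa)}}'$ under both $F_T$ and $S_T$ follows. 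The gap is the assertion that the collapsed shift ``reproduces the action of $S_J$.'' Carrying out the regrouping you describe actually yields $S_T\ket{x,Up}_{\sigma_r}=\ket{x-1,Down}_{\sigma_r}$, $S_T\ket{x,Down}_{\sigma_r}=\ket{x+1,Up}_{\sigma_r}$ and $S_T\ket{e,\sigma_r}=\ket{1,Up}_{\sigma_r}$, precisely because the generator that is the parent direction at $g$ becomes a child direction at $\sigma g$ --- the delicacy you flag in your last paragraph but do not resolve. This is the flip--flop shift on $\Jk$, whereas $S_J$ of Section 2.1 is direction preserving: $\ket{\hr{r}(x),Up}\mapsto\ket{\hr{r}(x-1),Up}$ and $\ket{0,\er{r}}\mapsto\ket{\hr{r}(1),Down}$. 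Under the label-preserving bijection (\ref{nagoya}) the two shifts therefore do not coincide, and no adjustment of the coin alone can repair a mismatch that sits in the shift.

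To finish the argument you must exchange $Up$ and $Down$ in the identification, i.e.\ absorb the swap $X=\kbt{Up}{Down}+\kbt{Down}{Up}$ into the coin. Doing this, the equivalent moving-shift walk on $\Jk$ has local coin
\begin{equation*}
\left[\begin{array}{cc}\sqrt{\kappa-1}\,b_\kappa & a_\kappa\\ -a_\kappa & \sqrt{\kappa-1}\,b_\kappa\end{array}\right]
\end{equation*}
away from the origin (with $\tc G_\kappa$ at the origin unchanged), not the symmetric matrix displayed in the lemma. The difference is not cosmetic: in Theorem \ref{thm:weak convergence} the support of $f_K$ is $[0,|a|)$, and the two candidate coins give $|a|=|a_\kappa|=(\kappa-2)/\kappa$ versus $|a|=\sqrt{\kappa-1}\,b_\kappa=2\sqrt{\kappa-1}/\kappa$; a dispersion-relation (or $\kappa=2$, where the Grover walk on $\bT_2=\bZ$ is free motion at unit speed) check confirms the latter. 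So as written your proof asserts a shift identity that is false, and the step you must redo is exactly the final matching of $S_T$ with $S_J$ together with the resulting change of the effective coin.
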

When we consider $\tc^{-1}F_J^{(\mathbb{T}_\kappa)}$, the above equation becomes a special case of Eq. (\ref{mon}), since $|\tc|=1$. 


Similar to $\Vt$, we can define a quantum walk $V_{t,\kappa',\kappa}$ on a semi-homogeneous tree $\mathbb{T}_{\kappa',\kappa}$, 
which is a $\kappa$-regular tree except the origin whose degree is $\kappa^\prime\geq 2$,
with the local coin operator $\tc G_{\kappa^\prime}$ at the origin and $G_\kappa$ otherwise.
Then we can reduce it to $W_{t,\kappa^\prime}$ with the coin operator $F_J^{(\mathbb{T}_{\kappa',\kappa})}$ given by
\begin{eqnarray*}
F_J^{(\mathbb{T}_{\kappa',\kappa})} 
= \ket{0}\bra{0} \otimes \tc G_{\kappa^\prime} + \sum_{x\in V(\bJ_{\kappa'})\setminus \{0\}} \ket{x}\bra{x} 
\otimes 
\left[ 
\begin{array}{cc}
a_\kappa&\sqrt{\kappa-1}b_\kappa\\
\sqrt{\kappa-1}b_\kappa& -a_\kappa\\
\end{array}
\right]. 
\end{eqnarray*}
The infinite binary tree is a special case for this graph ($\kappa=3,\ \kappa^\prime=2$).
\section{Main results}
In our main theorems, we give explicit formulae with respect to each half line in $\mathbb{J}_\kappa$. 
Let $\Psi_t(x)$ be the state of the quantum walk $W_{t,\kappa}$ at time $t$ and position $x$. 
For $x\in \mathbb{Z}_+\cup\{0\}$, we introduce random variables $X_{t,r}$ as $P(X_{t,r}=0)=|\alpha_t(0,\epsilon_r)|^2$ and 
$P(X_{t,r}=0)=||\Psi_t(h_r(x))||^2$. Remark that $P(W_{t,\kappa}=h_r(x))=P(X_{t,r}=x)$ for $x\geq 1$, 
$P(W_{t,\kappa}=0)=\sum_{j\in \mathbb{K}_\kappa}P(X_{t,j}=0)$ and $\sum_{j\in\mathbb{K}_\kappa}\sum_{x\in \mathbb{Z}_+\cup \{0\}}P(X_{t,j}=x)=1$. 

In order to describe the limit theorems for $W_{t,\kappa}$, we first introduce several parameters.
\begin{eqnarray*}
&& \phi \lequiv \arg{ (c) },\\
&& K_{\pm} \lequiv \left| 1 \pm c \right|^2,\\
&& K_{\times} \lequiv (1-c)(1+\bar{c}),
\end{eqnarray*}
where $\bar{a}$ is the complex conjugation of $a\in \mathbb{C}$. 
Next, we denote the following notations to state Theorem \ref{thm:localization}. 
Localization is described by three terms $L_m(x)$, $L^r_p(x)$ and $L^r_c(x)$. 
\begin{align*}
&\Psi_0(0) \requiv \sum_{j\in\Kk} \psi_j \ket{0,\er{j}}, \nonumber \\
&L_m(x) = \Gamma_-(x) \left|\sum_{j\in\mathbb{K}_\kappa}\psi_j\right|^2, \;\;
L_p^r(x) = \Gamma_+(x) \left|\sum_{j\in\mathbb{K}_\kappa}(\psi_j-\psi_r)\right|^2 ,\\
&L_c^r(x,t) =2b_\kappa^2\mathrm{Re}\left[ \Gamma_\times (x,t)
	\left(\sum_{j\in \mathbb{K}_\kappa}\overline{\psi_j}\right)\left(\sum_{j\in \mathbb{K}_\kappa}(\psi_j-\psi_r)\right) \right], \\
&\Gamma_\pm(x) = \frac{b_\kappa^2 |c|^2(\cos{\phi} \pm |c|)^2}{K_\pm^2} 
	\left\{ \delta_0(x)+(1-\delta_0(x))\left(\frac{|a|^2}{K_\pm}\right)^{x-1}\left(1+\frac{|a|^2}{K_\pm}\right) \right\},\\
&\Gamma_\times(x,t) = \left( \sqrt{ \frac{K_\times}{\ \overline{K_\times}\ } }\right)^{t+1} \frac{|c|^2({\cos^2{\phi}}-|c|^2)}{K_\times^2} \\ 
&\;\;\;\;\;\;\;\;\;\;\;\;\;\;\;\;\times\left\{ -\delta_0(x) \sqrt{\frac{K_\times}{\ \overline{K_\times}\ }} + (1-\delta_0(x)) \left( \frac{|a|^2}{\sqrt{K_+K_-}} \right)^{x-1} 
				\left( 1-\frac{|a|^2}{\ \overline{K_\times}\ } \right)  \right\} .
\end{align*}
Then we have the following theorem. 
\begin{theorem}[Localization] \label{thm:localization}
For $\kappa\geq1$, $x\in\bZ_+\cup\{0\}$, $r\in\Kk$,
\begin{multline*}
P(\Xtr=x) \sim \\ 
\frac{1+(-1)^{t+x}}{2}  \left\{ I_{[-1,|c|)}\left(\cos{\phi}\right)L_m(x) + I_{(-|c|,1]}\left(\cos{\phi}\right)L_p^r(x) + I_{(-|c|,|c|)}\left(\cos{\phi}\right)L_c^r(x,t) \right\} ,
\end{multline*}
where $f(t)\sim g(t)$ means $f(t)/g(t)\to 1$  $(t\to \infty)$.
\end{theorem}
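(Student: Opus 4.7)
The plan is to first invoke the reduction set up in Subsection 4.1 to replace $W_{t,\kappa}$ with an equivalent walk on the half line $\mathbb{Z}_+\cup\{0\}$ in an enlarged basis. Under this reduction, the initial state $\Psi_0(0)=\sum_j\psi_j|0,\epsilon_j\rangle$ splits naturally into modes labelled by the rays of $\mathbb{J}_\kappa$. Since $P(X_{t,r}=x)$ only probes the $r$-th ray, I would further project the dynamics onto the two effective degrees of freedom that govern this probability: a ``symmetric'' component with weight $\sum_j\psi_j$ and an ``asymmetric relative to $r$'' component with weight $\sum_j(\psi_j-\psi_r)$. These are precisely the combinations appearing in $L_m$, $L_p^r$, and $L_c^r$, so the decomposition will explain why only these two functionals of the initial state contribute.

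Second, following the path-counting strategy of \cite{konno01,konno02,oka01}, I would compute the generating function $\tilde{\alpha}_r(x,z)=\sum_{t\ge 0}z^t\alpha_t(h_r(x),\cdot)$ for each effective mode; the reflecting-wall half-line walk with general coin $C$ admits an explicit rational form in $z$, with spatial part decaying as $(|a|^2/K_\pm)^{x-1}$, matching the $x$-dependence of $\Gamma_\pm$ in the theorem. Inverting the transform via $\alpha_t(h_r(x),\cdot)=(2\pi i)^{-1}\oint z^{-t-1}\tilde{\alpha}_r(x,z)\,dz$ and deforming the contour, the large-$t$ asymptotics is controlled by those poles of $\tilde{\alpha}_r(x,z)$ that sit on the unit circle, while poles strictly inside contribute terms that decay exponentially in $t$.

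Third, I would classify the location of the unit-circle poles. The Grover-type coin at the origin creates simple poles at phases determined by $\cos\phi$ relative to $|c|$: one pole contributes to localization precisely when $\cos\phi<|c|$ and produces $L_m(x)$; the other when $\cos\phi>-|c|$ and produces $L_p^r(x)$. When $|\cos\phi|<|c|$ both poles coexist, and the cross product of their residues produces the oscillatory term $L_c^r(x,t)$, whose time dependence $\sqrt{K_\times/\overline{K_\times}}^{\,t+1}$ is exactly the ratio of the two pole phases. The parity prefactor $(1+(-1)^{t+x})/2$ emerges automatically from the bipartite structure of the half-line walk. The $\delta_0(x)$ versus $1-\delta_0(x)$ split in $\Gamma_\pm$ is accounted for by separating the boundary residue at $x=0$ from the bulk formula for $x\ge1$.

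The principal obstacle is the algebraic bookkeeping: locating all poles of the generating function exactly, distinguishing the relevant unit-circle poles from spurious ones, and matching the resulting residues to the precise closed forms $\Gamma_\pm(x)$ and $\Gamma_\times(x,t)$, including getting the phases $\arg K_\times$ and the factor $1-|a|^2/\overline{K_\times}$ correct. A secondary subtlety is justifying the $\sim$ asymptotic uniformly: one must show that the contributions from poles off the unit circle, together with any contour integrals running away from the localized poles, decay to zero faster than $\Gamma_\pm(x)$ and $\Gamma_\times(x,t)$, so that they do not spoil the stated ratio $f(t)/g(t)\to 1$. Once the generating function is known explicitly (as will be computed in the Appendix), both tasks reduce to careful but routine residue calculus.
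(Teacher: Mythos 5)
Your proposal follows essentially the same route as the paper: reduce $W_{t,\kappa}$ to the half-line walk $X_t^*$ in the enlarged \emph{Own}/\emph{Other} basis, compute the generating function by the path-counting method of Oka et al.\ (the Appendix), invert it by a Cauchy contour integral, and read off the $t\to\infty$ asymptotics from the residues at the unit-modulus poles $\pm w_\pm$, with the cross terms between the two residue families producing the oscillatory $L_c^r(x,t)$ and the operator $\Lambda_r(\boldsymbol{\psi})$ producing the combinations $\sum_j\psi_j$ and $\sum_j(\psi_j-\psi_r)$. One small correction to your second step: the contributions that decay exponentially in $t$ come from singularities \emph{outside} the unit circle (a pole at $|z_0|<1$ would make $z_0^{-(t+1)}$ grow, and unitarity rules such poles out), while the remaining continuous-spectrum part on the unit circle decays by a Riemann--Lebesgue type argument; this does not affect your overall strategy, which matches the paper's.
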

We see that in many cases the quantum walk on $\mathbb{J}_\kappa$ exhibits localization. Localization does
not occur only in the following two cases, ``$\sum_{j}(\psi_j-\psi_r)=0$ for any $r$ and $\cos\phi\geq |c|$" 
and ``$\sum_{j}\psi_j=0$ and $\cos\phi\leq -|c|$". 
Only the symmetric initial state (i.e., $\psi_i=1/\sqrt{\kappa}$ for any $i$) satisfies the first condition. 
Moreover $L^r_c(x, t)$ is an oscillatory term, so the probability oscillates if $L^r_c(x, t)$ exists. 
The probability $P(X_{t,0} = 1)$ is shown in Fig. 2, where we choose the local coin operator as $e^{i\varphi}H$. 
\begin{figure}[htbp]
\begin{center}
	\includegraphics[width=115mm]{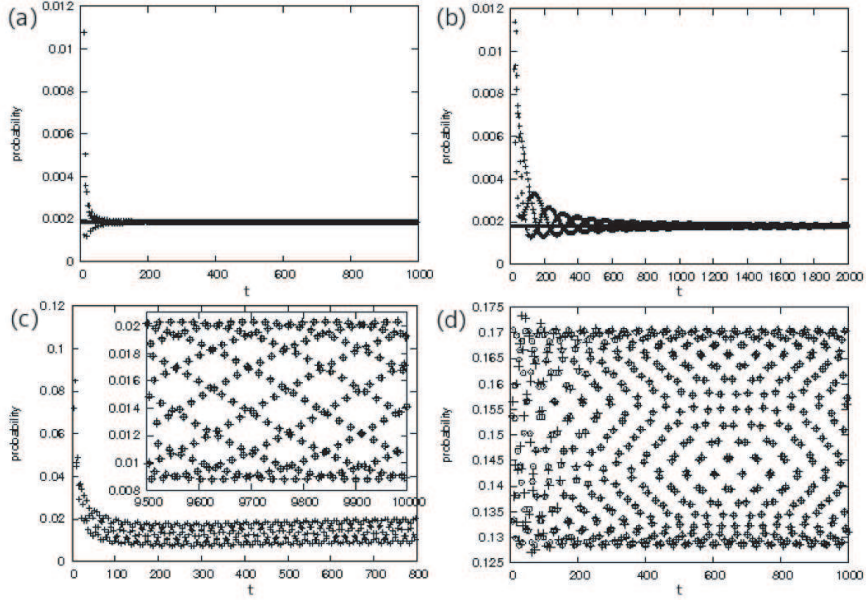}
\end{center}
\caption{Comparison between theoretical values with circles and numerical estimations with crosses of $W_{t,3}$ with $C=e^{i\varphi}H$.
The probability $P(X_{t,0}=1)$ is plotted. The initial state is $\psi_0=e^{i10\pi/180}/\sqrt{3}$, $\psi_1=e^{i30\pi/180}/\sqrt{3}$, $\psi_2=e^{i340\pi/180}/\sqrt{3}$ 
and (a) $\varphi=0$, (b) $\varphi=40\pi/180$, (c) $\varphi=50\pi/180$, (d) $\varphi=80\pi/180$. 
Since $|c|=1/\sqrt{2}$, $\varphi=45\pi/180$ is a critical point for the oscillatory behavior. 
In the large figure of (c), theoretical value are omitted. 
 }
\label{fig:one}
\end{figure}
From Theorem 1, the condition for the existence $L^r_c(x, t)$ is $-|c| < \cos\phi < |c|$. 
Therefore, in this case, the oscillation emerges when $\pi/4 < \varphi < 3\pi/4$.
Remark that from Theorem 1 we can see the following relation, 
\[ \sum_{r\in\mathbb{K}_\kappa}L_c^r(x,t) 
	=2b_\kappa^2\mathrm{Re}\left[ \Gamma_\times (x,t)
	\left(\sum_{j\in \mathbb{K}_\kappa}\overline{\psi_j}\right)\left(\sum_{r\in\mathbb{K}_\kappa}\sum_{j\in \mathbb{K}_\kappa}(\psi_j-\psi_r)\right) \right]=0, \]
This means that the oscillation disappears when we take the probability summed over all
vertices with a same distance from the origin. In addition, since $P(W_{t,\kappa} = 0) =\sum_{j} P(X_{t,j} =0)$, 
the probability of the origin does not oscillate for any condition. We also find that the
distribution has an exponentially decay with $x$ from Theorem 1. The probability $P(X_{10000,0} =x)$ is shown in Fig. 3.
\begin{figure}[htbp]
\begin{center}
	\includegraphics[width=115mm]{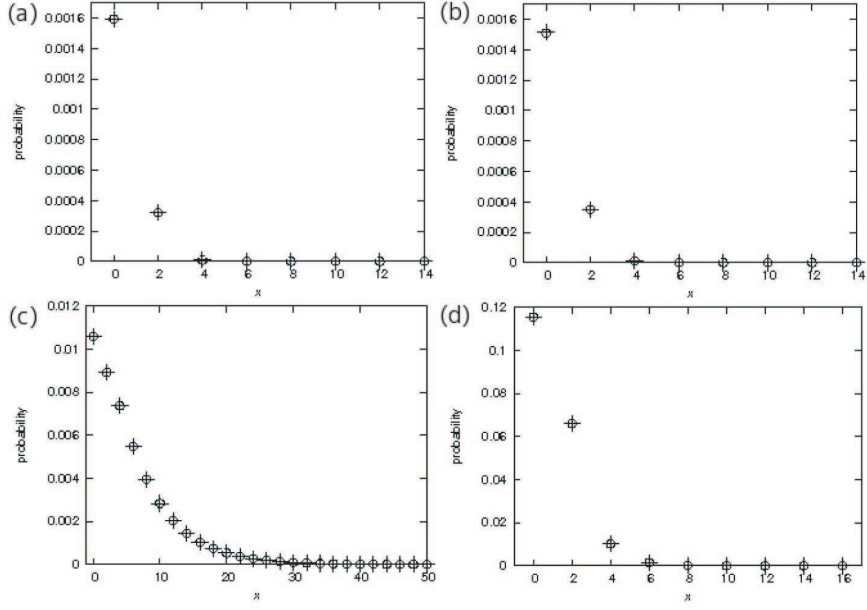}
\end{center}
\caption{Comparison between theoretical values with circles and numerical estimations with crosses of $W_{t,3}$ with $C=e^{i\varphi}H$.
The probability $P(X_{10000,0}=x)$ is plotted. The initial state is $\psi_0=e^{i10\pi/180}/\sqrt{3}$, $\psi_1=e^{i30\pi/180}/\sqrt{3}$, 
$\psi_2=e^{i340\pi/180}/\sqrt{3}$ 
and (a) $\varphi=0$, (b) $\varphi=40\pi/180$, (c) $\varphi=50\pi/180$, (d) $\varphi=80\pi/180$. 
}
\label{fig:one}
\end{figure}

In order to state the weak convergence theorem, at first we define some parameters depending
on the initial state. For $r \in \mathbb{K}_\kappa$, we put 
\begin{align}
& \theta^r_1(\bi{\psi}) \lequiv \left| \psi_r \right|^2, \label{eq:theta1}\\
& \theta^r_2(\bi{\psi}) \lequiv \sum_{j\in\Kk\setminus \{r\}} \overline{\psi_r}\psi_j, \label{eq:theta2}\\
& \theta^r_3(\bi{\psi}) \lequiv \sum_{j\in\Kk\setminus \{r\}} \left| \psi_j\right|^2 +
 	\sum_{j,k\in\Kk\setminus \{r\} \atop j\neq k} \left( \psi_j\overline{\psi_k} 
	+\overline{\psi_j}\psi_k\right) \label{eq:theta3}.
\end{align}
\noindent 
Next, we introduce the following notations. Terms $C_m$ and $C^r_d$ are delta measures which are
caused by localization and $C^r_d(x)$ is a weight on density function $f_K(x)$ which is formed a
typical shape of one dimensional quantum walks. 
\begin{eqnarray*}
&& C_m = \frac{b_\kappa^2|c|(|c|-\cos{\phi})}{2K_-} \left| \sum_{j\in \mathbb{K}_\kappa}\psi_j\right|^2 ,\;\;
   C_p^r = \frac{b_\kappa^2|c|(|c|+\cos{\phi})}{2K_+} \left| \sum_{j\in \mathbb{K}_\kappa}(\psi_j-\psi_r)\right|^2,\\
&& C_d^r(x) = \frac{\Gamma_1(x) \theta^r_1(\bi{\psi}) + 2\Re(\Gamma_2(x) \theta^r_2(\bi{\psi})) + \Gamma_3(x) \theta^r_3(\bi{\psi})}{(K_+-(1-x^2)\sin^2{\phi})(K_--(1-x^2)\sin^2{\phi})}x^2 , \\
\end{eqnarray*}
\begin{eqnarray*}
&& \Gamma_1(x) = 4a_\kappa |c|(|a|^2-x^2)\cos{\phi}\sin^2{\phi}  \\
&&\;\;\;\;\;\;\;\;\;\;\;\;\;\;\;\;+ \left( a_\kappa^2 + 2a_\kappa |c|\cos{\phi} + 1  \right) \left( 1+|c|^2-2|c|^2\cos^2{\phi}-(1-x^2)\sin^2{\phi} \right) ,\\
&& \Gamma_2(x) = - 2b_\kappa|c|(|a|^2-x^2)ie^{i\phi}\cos{\phi}\sin{\phi} \\
&&\;\;\;\;\;\;\;\;\;\;\;\;\;\;\;\;+ b_\kappa (a_\kappa + |c|e^{i\phi})(1+|c|^2-2|c|^2\cos^2{\phi} - (1-x^2)\sin^2{\phi}), \\
&& \Gamma_3(x) = b_\kappa^2 (1+|c|^2-2|c|^2\cos^2{\phi}-(1-x^2)\sin^2{\phi}). \\
\end{eqnarray*}
The weak convergence theorem is derived as follows. 
\begin{theorem}[Weak convergence] \label{thm:weak convergence}
For $\kappa\geq 1$, $r\in\Kk$, as $t\to\infty$ 
\begin{eqnarray*}
P \left( u\leq \frac{\Xtr}{t} \leq v \right) \to \int_u^v \rho_W^r(x)  dx.
\end{eqnarray*}
The limit measure is defined by 
\begin{eqnarray*}
\rho_W^r(x) = \left\{ I_{[-1,|c|)}\left(\cos{\phi}\right)C_m + I_{(-|c|,1]}\left(\cos{\phi}\right)C_p^r \right\} \delta_0(x) + C_d^r(x) f_K(x) ,
\end{eqnarray*}
where
\begin{eqnarray*}
 f_K(x) = \frac{I_{[0,a)}(x)\sqrt{1-|a|^2}}{\pi(1-x^2)\sqrt{|a|^2-x^2}}.
\end{eqnarray*}
\end{theorem}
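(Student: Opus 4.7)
\textbf{Proof proposal for Theorem \ref{thm:weak convergence}.} The plan is to compute the limit of the characteristic function $E[\exp(i\xi X_{t,r}/t)]$ and then appeal to L\'evy's continuity theorem. Since Theorem \ref{thm:localization} already supplies the asymptotic behaviour of $P(X_{t,r}=x)$ for each fixed $x$, the contribution that produces the atom at $0$ is essentially known; the main work is to identify the ballistic part that survives under the $1/t$ scaling.

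First, I would invoke the reduction promised in Subsection 4.1, which replaces $W_{t,\kappa}$ with a walk on a half line carrying an enlarged $2$-component (Up/Down) internal state plus a $\kappa$-component state at the origin. After this reduction each amplitude $\alpha_t(h_r(x),l)$ for $x\ge 1$ satisfies a linear recursion with constant coefficients away from the boundary, so its generating function $\widehat{\alpha}_r(x,l;z)=\sum_{t\ge 0}z^t\alpha_t(h_r(x),l)$ is of the closed form computed in the Appendix. Substituting $z=e^{i\theta}$ and applying Fourier inversion gives
\begin{equation*}
\alpha_t(h_r(x),l)=\frac{1}{2\pi}\int_{-\pi}^{\pi}e^{-it\theta}\widehat{\alpha}_r(x,l;e^{i\theta})\,d\theta,
\end{equation*}
which is the standard starting point for the Konno-type weak limit analysis.

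Next I would split the characteristic function as $E[e^{i\xi X_{t,r}/t}]=P(X_{t,r}=0)+\sum_{x\ge 1}\|\Psi_t(h_r(x))\|^{2}e^{i\xi x/t}$. The first term converges, by Theorem \ref{thm:localization} combined with dominated convergence over $r$, to the coefficient of $\delta_0(x)$ appearing in $\rho_W^r$, i.e.\ $I_{[-1,|c|)}(\cos\phi)C_m+I_{(-|c|,1]}(\cos\phi)C_p^r$. For the ballistic sum I would insert the Fourier representation, change the order of summation, and pass to the scaling variable $y=x/t$, so that the sum is asymptotically a Riemann sum for an integral over $y$. Following the stationary-phase / Konno path-counting calculation (as in \cite{konno01,konno02}), the saddle points of the phase determine the support $[0,|a|)$ of $f_K$, and the Jacobian factor $1/\{\pi(1-x^{2})\sqrt{|a|^{2}-x^{2}}\}$ together with the unitarity constant $\sqrt{1-|a|^{2}}$ produce the density $f_K$. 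The $x$-dependent prefactor $C_d^r(x)$ is read off from the modulus squared of the weight associated with each component of the saddle-point contribution, expanded using $\theta_1^r,\theta_2^r,\theta_3^r$ to isolate the dependence on $\bi{\psi}$.

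The main obstacle is the bookkeeping in the last step: the generating function is a $2\times 2$ (or $\kappa\times\kappa$ at the origin) matrix-valued object, and one must track the interference between its entries carefully to obtain the precise form of $\Gamma_1,\Gamma_2,\Gamma_3$ and the denominator $(K_+ -(1-x^{2})\sin^{2}\phi)(K_- -(1-x^{2})\sin^{2}\phi)$. A secondary technical point is justifying that the contributions from poles of $\widehat{\alpha}_r(x,l;z)$ inside the unit disk (which drive localization) contribute only to $\delta_0$ and not to the continuous part after the $1/t$ scaling; this follows because those poles yield exponentially localized amplitudes whose mass at scale $x=O(t)$ is negligible. Once these pieces are assembled, comparing with $P(X_{t,r}=0)\to\{I_{[-1,|c|)}C_m+I_{(-|c|,1]}C_p^r\}$ shows that no double counting occurs and yields the announced $\rho_W^r$.
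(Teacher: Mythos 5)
Your overall architecture matches the paper's: reduce to the half-line walk of Subsection 4.1, use the generating function from the Appendix, and split the characteristic function into a localized part (giving the atom at $0$) and a ballistic part (giving $C_d^r(x)f_K(x)$). However, there is a concrete wrong step in how you produce the atom. You decompose $E[e^{i\xi \Xtr/t}]=P(\Xtr=0)+\sum_{x\ge 1}P(\Xtr=x)e^{i\xi x/t}$ and claim that the first term converges to $I_{[-1,|c|)}(\cos\phi)C_m+I_{(-|c|,1]}(\cos\phi)C_p^r$. This is false: by Theorem \ref{thm:localization}, $P(\Xtr=0)$ oscillates with the parity of $t$ and its time average is only the $x=0$ term $\tfrac12\{\cdots L_m(0)+\cdots L_p^r(0)\}$, whereas the weight of $\delta_0$ in $\rho_W^r$ is the localized mass summed over \emph{all} sites, $C_m=\sum_x L_m(x)$ and $C_p^r=\sum_x L_p^r(x)$ (every site at distance $o(t)$ collapses to $0$ under the $1/t$ scaling). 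As written, your bookkeeping either double counts or loses the localized mass at $x\ge 1$, and the final ``no double counting'' check cannot go through. The paper avoids this by never splitting off $x=0$: it writes $E[e^{i\xi\Xtr}]=\int_0^{2\pi}\langle\Lambda_r(\bi{\psi})\hat{\Psi}_t^*(s),\Lambda_r(\bi{\psi})\hat{\Psi}_t^*(s+\xi)\rangle\,ds/2\pi$, expands $\hat{\Psi}_t^*(s)$ into residues at the $s$-independent poles $\pm w_\pm$ (whose squared norms integrate exactly to $C_m+C_p^r$, with cross terms killed by Riemann--Lebesgue) and at the $s$-dependent poles $v_\pm(s)$, whose phases $e^{\mp i(t+1)\theta(s)}$ yield $e^{\pm i\xi h(s)}$ with $h(s)=d\theta(s)/ds$ and hence $f_K$ after a change of variables.

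Two further points need repair. First, your inversion formula $\alpha_t=\frac{1}{2\pi}\int e^{-it\theta}\widehat{\alpha}(e^{i\theta})\,d\theta$ is not justified: the generating function is only shown to converge for $|z|<r_1<1$ and has poles \emph{on} the unit circle ($|w_\pm|=|v_\pm(s)|=1$), so the contour must be $|z|=r<r_1$ as in the paper, with the asymptotics extracted by residues. Second, your statement that poles ``inside the unit disk'' drive localization is backwards: the localization poles lie on the unit circle (as they must, by unitarity), and the exponential decay in $x$ comes from $|d\lambda(w_\pm)/a|<1$ in the factor $\{d\lambda(z)/a\}^{x-1}$, not from the location of the poles relative to the disk. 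Your stationary-phase plan for the ballistic part is workable in principle, but without the $s$ versus $s+\xi$ inner-product identity (or an equivalent two-variable Fourier setup) you have not actually exhibited the phase whose derivative gives the group velocity $h(s)$, which is where the support $[0,|a|)$ and the density $f_K$ come from.
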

The function $C_d^0(x)f_K(x)$ and the scaled numerical values are shown in Fig. \ref{fig:weak convergence}.
\begin{figure}[t]
\begin{center}
	\includegraphics[width=115mm]{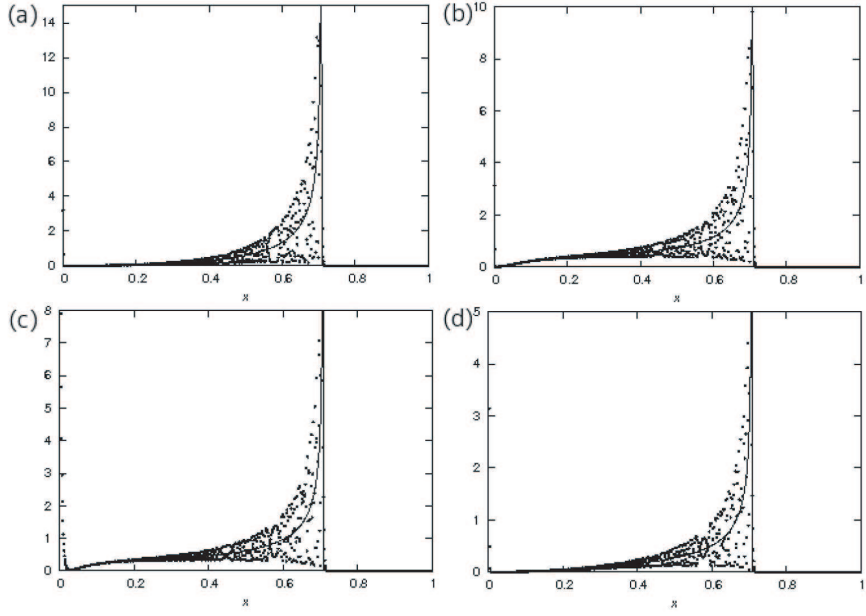}
\end{center}
\caption{Comparison between theoretical values with line and numerical estimations with dots of $W_{t,3}$ with $C=e^{i\varphi}H$.
Density function  $C_d^0(x)f_K(x)$ and scaled numerical values at time $2000$ are plotted. 
In theoretical values, the delta measure is omitted. Since localization has an exponentially decay, 
the scaled numerical values corresponding to localization converge to the delta measure at the infinite time. 
The initial state is $\psi_0=e^{i10\pi/180}/\sqrt{3}$, $\psi_1=e^{i30\pi/180}/\sqrt{3}$, $\psi_2=e^{i340\pi/180}/\sqrt{3}$ 
and (a) $\varphi=0$, (b) $\varphi=40\pi/180$, (c) $\varphi=50\pi/180$, (d) $\varphi=80\pi/180$. 
}
\label{fig:weak convergence}
\end{figure}
Same as other one-dimensional cases \cite{konno01,konno02,grimmett01,miyazaki01,segawa01}, 
this distribution has scaling order $t$ and the typical density function of quantum walks $f_K(x)$. 
The delta measures $C_m$ and $C^r_p$ are caused by localization, i.e., $C_m =\sum_x L_m(x)$ and $C^r_p =\sum_x L^r_p(x)$. 

The above expressions of both theorems seem to be complicated, however for the following
cases, they are written in simpler forms.
\begin{corollary}[$\phi=0$] \label{cor:phi=0}
For $x\in\bZ_+\cup\{0\}$, $r\in\Kk$ and $\phi = 0$, 
\begin{multline*}
P(\Xtr=x) \sim \\ \left( \frac{1+(-1)^{t+x}}{2} \right) \frac{b_\kappa^2 |c|^2}{1+|c|^2}  \left\{ \delta_0(x) 
+ (1-\delta_0(x)) \left(\frac{2}{1+|c|}\right) \left(\frac{1-|c|}{1+|c|} \right)^{x-1} \right\}  \left|\sum_{j\in\mathbb{K}_\kappa}(\psi_j-\psi_r)\right|^2 ,
\end{multline*}
and $\Xtr/t$ converges weakly to a limit measure $\rho_{W_{(0)}}^r$ as $t\to \infty$, where
\begin{multline*}
\rho_{W_{(0)}}^r(x) = \\
\frac{b_\kappa^2|c|}{2(1+|c|)} \left| \sum_{j\in \mathbb{K}_\kappa}(\psi_j-\psi_r)\right|^2 \delta_0(x) 
	+ \frac{1}{|a|^2}\left(\left| (|c|-1)\psi_r + b_\kappa\sum_{j\in \mathbb{K}_\kappa} \psi_j \right|^2 + |a|^2|\psi_r|^2 \right) x^2 f_K(x) .
\end{multline*}
\end{corollary}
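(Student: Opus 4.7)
The plan is to specialize both Theorem \ref{thm:localization} and Theorem \ref{thm:weak convergence} to $\phi=0$ and perform the resulting algebraic simplifications. Since $\phi=\arg(c)=0$ forces $c$ positive real, one has $\cos\phi=1$, $\sin\phi=0$, $K_\pm=(1\pm|c|)^2$, and $K_\times=1-|c|^2=|a|^2$, which is real. Unitarity together with $abcd\neq 0$ forces $|c|<1$, so the value $\cos\phi=1$ lies in $(-|c|,1]$ but not in $[-1,|c|)$ or $(-|c|,|c|)$. Hence in Theorem \ref{thm:localization} only the $L_p^r$ term survives, and in Theorem \ref{thm:weak convergence} only $C_p^r$ survives as the coefficient of the delta measure; the oscillatory contribution $L_c^r$ and the $\cos\phi<0$ branches $L_m$ and $C_m$ both drop out.

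For the localization statement, I would substitute $|a|^2=1-|c|^2=(1-|c|)(1+|c|)$ into $\Gamma_+(x)$: this gives $|a|^2/K_+ = (1-|c|)/(1+|c|)$, $1+|a|^2/K_+ = 2/(1+|c|)$, and the prefactor $b_\kappa^2|c|^2(\cos\phi+|c|)^2/K_+^2 = b_\kappa^2|c|^2/(1+|c|)^2$. Combining these with $L_p^r(x)=\Gamma_+(x)\,|\sum_j(\psi_j-\psi_r)|^2$ and collecting constants produces the claimed asymptotic for $P(X_{t,r}=x)$. In parallel, $C_p^r$ at $\phi=0$ collapses to $\tfrac{b_\kappa^2|c|}{2(1+|c|)}|\sum_j(\psi_j-\psi_r)|^2$, which matches the delta-measure piece of $\rho_{W_{(0)}}^r$.

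For the absolutely continuous part, setting $\sin\phi=0$ collapses the denominator of $C_d^r$ to $K_+K_-=|a|^4$, and every $\sin\phi$ factor in $\Gamma_1,\Gamma_2,\Gamma_3$ vanishes. A direct substitution gives $\Gamma_1=((a_\kappa+|c|)^2+|a|^2)|a|^2$, $\Gamma_2=b_\kappa(a_\kappa+|c|)|a|^2$, and $\Gamma_3=b_\kappa^2|a|^2$; one power of $|a|^2$ then cancels with the denominator, and the task reduces to verifying the algebraic identity
$$(a_\kappa+|c|)^2\theta_1^r+|a|^2\theta_1^r+2b_\kappa(a_\kappa+|c|)\Re\theta_2^r+b_\kappa^2\theta_3^r \;=\; \left|(|c|-1)\psi_r+b_\kappa\sum_j\psi_j\right|^2 + |a|^2|\psi_r|^2.$$

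The hardest step is this last identity. The key structural observation is the Grover relation $a_\kappa+1=b_\kappa$, which yields $|c|-1+b_\kappa = a_\kappa+|c|$ and lets one rewrite $(|c|-1)\psi_r+b_\kappa\sum_j\psi_j = (a_\kappa+|c|)\psi_r+b_\kappa\sum_{j\neq r}\psi_j$. Expanding the squared modulus then produces $(a_\kappa+|c|)^2|\psi_r|^2$, a cross term $2b_\kappa(a_\kappa+|c|)\Re\theta_2^r$, and a $b_\kappa^2|\sum_{j\neq r}\psi_j|^2$ piece; the residual $|a|^2|\psi_r|^2$ matches $|a|^2\theta_1^r$, and one finishes by reconciling the double sum inside $\theta_3^r$ against $|\sum_{j\neq r}\psi_j|^2$. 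This cross-index bookkeeping, together with the careful tracking of real parts, is the only genuinely non-mechanical part; everything else is substitution.
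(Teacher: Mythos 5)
Your proposal is correct and follows essentially the same route as the paper's own proof, which likewise notes that at $\phi=0$ only $L_p^r$ (resp.\ $C_p^r$) survives, records the same simplified values $\Gamma_1=|a|^2(a_\kappa^2+2a_\kappa|c|+1)$, $\Gamma_2=|a|^2b_\kappa(a_\kappa+|c|)$, $\Gamma_3=|a|^2b_\kappa^2$ and $K_+K_-=|a|^4$, and leaves the $\theta_1^r,\theta_2^r,\theta_3^r$ bookkeeping (which you carry out via $a_\kappa+1=b_\kappa$) as ``some calculations.'' One remark: the prefactor you obtain, $b_\kappa^2|c|^2/(1+|c|)^2$, is the correct one (it is the value consistent with $C_p^r=\sum_x L_p^r(x)$), so the denominator $1+|c|^2$ printed in the corollary appears to be a typo in the statement rather than an error in your derivation.
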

\begin{proof}
{From Theorem \ref{thm:localization}, when $\phi = 0$, we have $P(\Xtr=x)\sim(1+(-1)^{t+x})L_p^r(x)/2$.
Also Theorem \ref{thm:weak convergence} implies
$\Gamma_1(x) = |a|^2(a_\kappa^2 + 2a_\kappa|c| + 1)$, 
$\Gamma_2(x) = |a|^2b_\kappa (a_\kappa+|c|)$, 
$\Gamma_3(x) = |a|^2b_\kappa^2$, 
and $K_+K_-=|a|^4$. 
After some calculations of $\theta^r_1(\bi{\psi}),\ \theta^r_2(\bi{\psi}),\ \theta^r_3(\bi{\psi})$, we have the desired conclusion.} 
\end{proof}

From the definition of $W_{t,\kappa}$, $W_{t,1}$ is simply a quantum walk on a half line with a reflecting wall on the origin. 
In the next corollary, we denote $X_{t,0}\equiv X_t$ to assert that the walk is defined on a half line.
\begin{corollary}[Half line] \label{cor:half line}
For $x\in\bZ_+\cup\{0\}$,
\begin{multline*}
P\left(\Xt=x\right) \sim I_{[-1,|c|)}\left(\cos{\phi}\right) \left( \frac{1+(-1)^{t+x}}{2} \right) \\ 
	\times\frac{4 |c|^2(\cos{\phi}-|c|)^2}{K_-^2} \left\{ \delta_0(x)+(1-\delta_0(x))\left(\frac{|a|^2}{K_-}\right)^{x-1}\left(1+\frac{|a|^2}{K_-}\right) \right\} ,
\end{multline*}
and $\Xt/t$ converges weakly to a limit measure $\rho_X$ as $t\to \infty$, where
\begin{eqnarray*}
\rho_X(x) = I_{[-1,|c|)}\left(\cos{\phi}\right)\frac{2|c|(|c|-\cos{\phi})}{K_-}\delta_0(x) + \frac{2(1-|c|\cos{\phi})}{K_--(1-x^2)\sin^2{\phi}} x^2 f_K(x).
\end{eqnarray*}
\end{corollary}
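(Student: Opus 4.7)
The plan is to deduce Corollary \ref{cor:half line} by specializing Theorems \ref{thm:localization} and \ref{thm:weak convergence} to the case $\kappa=1$. When $\kappa=1$, the index set $\Kk=\{0\}$ has only one element, so the initial coin state reduces to a single amplitude $\psi_0$ with $|\psi_0|=1$, and the only available value of $r$ is $r=0$. Consequently $\sum_{j\in\Kk}\psi_j=\psi_0$ and $\sum_{j\in\Kk}(\psi_j-\psi_r)=0$, which immediately annihilates the contributions $L_p^r(x)$, $L_c^r(x,t)$ and $C_p^r$ appearing in the two main theorems. Only $L_m(x)$ survives in Theorem \ref{thm:localization}, and only $C_m\,\delta_0(x)$ together with $C_d^0(x)f_K(x)$ survive in Theorem \ref{thm:weak convergence}.

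For the localization half, I would substitute $a_1=1$, $b_1=2$ and $|\psi_0|^2=1$ into $L_m(x)=\Gamma_-(x)|\psi_0|^2$; since $b_1^2=4$, the resulting expression for $\Gamma_-(x)$ coincides verbatim with the formula claimed in the corollary, and the surviving indicator $I_{[-1,|c|)}(\cos\phi)$ is inherited directly from Theorem \ref{thm:localization}.

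For the weak convergence half, $C_m$ reduces immediately to $2|c|(|c|-\cos\phi)/K_-$ after inserting $b_1=2$ and $|\psi_0|^2=1$. Since $\Kk\setminus\{0\}=\emptyset$, we have $\theta_2^0(\bi{\psi})=\theta_3^0(\bi{\psi})=0$ while $\theta_1^0(\bi{\psi})=1$, so the density coefficient collapses to
\begin{equation*}
C_d^0(x)=\frac{\Gamma_1(x)\,x^2}{\bigl(K_+-(1-x^2)\sin^2\phi\bigr)\bigl(K_--(1-x^2)\sin^2\phi\bigr)}.
\end{equation*}
Matching this against the corollary amounts to proving the polynomial identity
\begin{equation*}
\Gamma_1(x)=2(1-|c|\cos\phi)\bigl(K_+-(1-x^2)\sin^2\phi\bigr),
\end{equation*}
after which the factor $K_+-(1-x^2)\sin^2\phi$ cancels and yields precisely $2(1-|c|\cos\phi)x^2/[K_--(1-x^2)\sin^2\phi]$.

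The only non-routine step is the verification of this identity, and it is where I expect to concentrate the algebraic work. My plan is to introduce $y=(1-x^2)\sin^2\phi$ as an auxiliary variable and compare the coefficients of $y^0$ and $y^1$ on both sides, using $|a|^2=1-|c|^2$, $K_\pm=1+|c|^2\pm2|c|\cos\phi$ and $\sin^2\phi+\cos^2\phi=1$. The coefficient of $y$ comes out as $-2(1-|c|\cos\phi)$ on both sides after a short manipulation; the constant comparison requires combining the term $-4|c|^3\cos\phi\sin^2\phi$ with the $-4|c|^3\cos^3\phi$ piece produced by expanding $2(1+|c|\cos\phi)(1+|c|^2-2|c|^2\cos^2\phi)$, after which $\sin^2\phi+\cos^2\phi=1$ collapses them into $-4|c|^3\cos\phi$, and a factoring out of $(1-|c|\cos\phi)$ completes the check. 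No conceptual obstacle remains beyond this finite algebraic verification.
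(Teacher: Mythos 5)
Your proposal is correct and follows essentially the same route as the paper: specialize Theorems \ref{thm:localization} and \ref{thm:weak convergence} to $\kappa=1$ (so $a_1=1$, $b_1=2$, $\theta_1^0=1$, $\theta_2^0=\theta_3^0=0$, and $L_p^0$, $L_c^0$, $C_p^0$ all vanish), and reduce the density coefficient via the identity $\Gamma_1(x)=2(1-|c|\cos\phi)\{K_+-(1-x^2)\sin^2\phi\}$, which is exactly the paper's Eq. (\ref{eq:halflinemeasure}). Your coefficient-comparison check of that identity is sound, so there is nothing to add.
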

\begin{proof}
{For $\kappa=1$, we have  $b_1=2/\kappa=2$, $a_1=b_1-1=1$, $\theta^0_1(\bi{\psi})=1$ and $\theta^0_2(\bi{\psi})=\theta^0_3(\bi{\psi})=0$.
From Theorem \ref{thm:localization}, we get $L_p^0(x)=0$ and $L_c^0(x,t)=0$, thus we should consider only $L_m^0(x)$ as localization factor of $X_t$.
From Theorem \ref{thm:weak convergence}, we have $C_p^0=0$ and
\begin{eqnarray}
\Gamma_1(x) = 2(1-|c|\cos{\phi})\left\{K_+ - (1-x^2)\sin^2{\phi}\right\} . \label{eq:halflinemeasure}
\end{eqnarray}
Combining $C_m$ and $C_d^r(x)$ with Eq. (\ref{eq:halflinemeasure}) implies $\rho_X(x)$.} 
\end{proof} 

\noindent Remark that we get another proof of Corollary \ref{cor:half line} by considering $\Wt$ with the symmetric initial state.

We can adopt Corollary \ref{cor:phi=0} for $\Vt$ with no perturbation, i.e., $\tc=1$.
In $\phi=0$ case, the formula for the $r$th half line is directly expressed by $\psi_0,\psi_1,\ldots,\psi_{\kappa-1}$ instead of 
$\theta_1^r(\bi{\psi}),\theta_2^r(\bi{\psi}),\theta_3^r(\bi{\psi})$.
Both cases of $\phi=0$ and half line, the oscillatory term $L_c^r(x,t)$ appearing in Theorem \ref{thm:localization} vanishes.

\section{Proofs of Theorems \ref{thm:localization} and \ref{thm:weak convergence}} \label{sec:proofs of theorems}

In order to prove Theorems \ref{thm:localization} and \ref{thm:weak convergence}, we consider a reduction of $\Wt$ on a half line.
For $\Wt$ with arbitrary initial states, we can not construct the reduction of the walk directly,
since the states with the same distance from the origin have different amplitudes.
To solve this problem, we introduce $\Wtd$ which is a quantum walk with an enlarged basis of $\Wt$.
After that, we construct $\Xtd$ as a reduction of $\Wtd$ on a half line.
To analyze $\Xtd$, we give the generating function of the states.
By using it, we obtain the limit states and the characteristic function of $\Wt$.

\subsection{Reduction to a half line} \label{sec:reduction to a half line}
Let $\Psi_t(x)$ be the state of the quantum walk $\Wt$ at time $t$ and position $x$.
We denote the initial state $\Psi_0(0)$ as $\boldsymbol{\psi}=\sum_{j\in \mathbb{K}_\kappa}\psi_j|0,\epsilon_j\rangle$. 
Now we rewrite $\boldsymbol{\psi}$ using a new orthogonal basis $\{ |\epsilon_j'\rangle; j\in \mathbb{K}_\kappa \}$ as 
\[ 
\boldsymbol{\psi} 
	=\sum_{j\in\mathbb{K}_\kappa}\left(\sum_{r\in\mathbb{K}_\kappa}\psi_r\langle \epsilon'_r|\otimes I_W\right)|\epsilon_j'\rangle|0,\epsilon_j\rangle
	=\sum_{j\in\mathbb{K}_\kappa}\Lambda(\boldsymbol{\psi})|\epsilon_j'\rangle|0,\epsilon_j\rangle,
\]
where $I_W$ is the identity operator on $\sH^{(\Jk)}$ and we defined as
\begin{eqnarray*}
\Lambda(\bi{\psi}) \lequiv \sum_{r\in\Kk} \psi_r \bra{\epsilon_r'} \otimes I_W.
\end{eqnarray*}
Now let $\sH^\prime$ be a Hilbert space spanned by an orthonormal basis $\{ \ket{\erd{j}};\ j\in\Kk \}$.
Then we define $\Wtd$ as a  quantum walk on $\sH^\prime \otimes \sH^{(\Jk)}$ with the evolution operator 
$U_J^\prime = (I_\kappa\otimes S_J)(I_\kappa\otimes F_J) = I_\kappa \otimes U_J$ 
and the initial state 
$\sum_{j\in\Kk} \ket{\erd{j}}\ket{0,\epsilon_j}$, where $I_\kappa$ is the identity operator on $\sH^\prime$.
Let $l_0\lequiv\{ \er{0},\er{1},\ldots,\er{\kappa-1} \}$ and $l_x\lequiv\{Up,Down\}$ for $x\in V(\Jk)\setminus\{0\}$.
Then the state of quantum walk $\Wtd$ at time $t$ and position $x$ is written as 
$\Psi_t^\prime(x) = \sum_{j\in\Kk, u\in l_x} \alpha_t^\prime(\erd{j},x,u)\ket{\erd{j}}\ket{x,u}$,
where $\alpha^\prime_t(a,b,c)$ is the amplitude of the base $\ket{a,b,c}$ at time $t$.
From the construction we obtain the following lemma.
\noindent \\
%
%
\begin{lemma}[Enlarging basis] \label{lem:enlarging basis}
For any $t\geq 0$ and $x \in V(\Jk)$,
\[
\Psi_t(x) = \Lambda(\bi{\psi}) \Psi_t^\prime(x) \nonumber.
\]
\end{lemma}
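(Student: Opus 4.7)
The plan is to prove this by induction on $t$, using the key fact that $\Lambda(\bi{\psi})$ intertwines the two evolutions because it acts only on the auxiliary factor $\sH'$, on which the enlarged dynamics $U'_J = I_\kappa \otimes U_J$ is trivial.

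First I would dispose of the base case $t=0$. At $x=0$, applying the definition of $\Lambda(\bi{\psi})$ to the initial state of $\Wtd$ gives
\[
\Lambda(\bi{\psi})\,\Psi_0'(0)
=\Bigl(\sum_{r\in \Kk}\psi_r \bra{\erd{r}}\otimes I_W\Bigr)\sum_{j\in\Kk}\ket{\erd{j}}\ket{0,\er{j}}
=\sum_{j\in\Kk}\psi_j\ket{0,\er{j}}=\bi{\psi}=\Psi_0(0),
\]
using orthonormality of $\{\ket{\erd{j}}\}$. For $x\neq 0$ both sides are zero since the initial states are supported at the origin.

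Next I would establish the intertwining identity $U_J\,\Lambda(\bi{\psi})=\Lambda(\bi{\psi})\,U_J'$ as maps $\sH'\otimes \sH^{(\Jk)}\to \sH^{(\Jk)}$. On a product vector $\ket{\alpha}\otimes\ket{\beta}\in\sH'\otimes\sH^{(\Jk)}$, both sides evaluate to $\sum_{r\in\Kk}\psi_r\bkt{\erd{r}}{\alpha}\,U_J\ket{\beta}$, which is a direct consequence of $\Lambda(\bi{\psi})$ contracting only the $\sH'$ tensor factor while $U_J'=I_\kappa\otimes U_J$ leaves that factor untouched. Because $\Lambda(\bi{\psi})$ is the identity on the position register of $\sH^{(\Jk)}$, this identity also holds position-wise: $U_J$ moves amplitudes between positions in exactly the same way regardless of whether the auxiliary label has been contracted.

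With these two ingredients, the induction is immediate. Writing $\ket{\Psi_t}=\sum_{x\in V(\Jk)}\Psi_t(x)$ and $\ket{\Psi_t'}=\sum_{x\in V(\Jk)}\Psi_t'(x)$, the assumption $\Psi_t(x)=\Lambda(\bi{\psi})\Psi_t'(x)$ for every $x$ gives $\ket{\Psi_t}=\Lambda(\bi{\psi})\ket{\Psi_t'}$, and then
\[
\ket{\Psi_{t+1}}=U_J\ket{\Psi_t}=U_J\,\Lambda(\bi{\psi})\ket{\Psi_t'}=\Lambda(\bi{\psi})\,U_J'\ket{\Psi_t'}=\Lambda(\bi{\psi})\ket{\Psi_{t+1}'}.
\]
Projecting onto the subspace associated with a fixed vertex $x$ (which commutes with $\Lambda(\bi{\psi})$, since the latter acts only on $\sH'$) yields the claimed position-wise identity at time $t+1$.

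There is no serious obstacle here; this is essentially a bookkeeping argument. The only point that requires care is keeping straight that $\Lambda(\bi{\psi})$ is a partial inner product on the auxiliary space $\sH'$ and therefore commutes with every operator of the form $I_\kappa\otimes A$, together with verifying that the specific initial state $\sum_j\ket{\erd{j}}\ket{0,\er{j}}$ chosen for $\Wtd$ is precisely the one that reproduces $\bi{\psi}$ under $\Lambda(\bi{\psi})$.
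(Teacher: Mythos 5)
Your proposal is correct and follows essentially the same route as the paper: induction on $t$, with the base case read off from the definition of $\Lambda(\bi{\psi})$ and the induction step resting on the intertwining relation $U_J\,\Lambda(\bi{\psi})=\Lambda(\bi{\psi})\,U'_J$, which holds because $\Lambda(\bi{\psi})$ contracts only the auxiliary factor on which $U'_J=I_\kappa\otimes U_J$ acts trivially. Your explicit handling of the position-wise projection at the end is a slightly more careful rendering of the paper's closing remark that ``this relation holds for any $x$,'' but the argument is the same.
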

%
%
\begin{proof}
{We show the equation by induction with respect to $t$. At $t=0$, by definition of $\Lambda(\bi{\psi})$, it is trivial. For fixed $t\geq 1$, 
we assume $\Psi_t(x) = \Lambda(\bi{\psi}) \Psi_t^\prime(x)$, then for $x\in V(\Jk)$,
\begin{eqnarray*}
U_J^{\prime} \Psi^{\prime}_t(x) &=& (I_\kappa\otimes U_J) \sum_{j\in\Kk, u\in l_x} \alpha_t^\prime(\erd{j},x,u)\ket{\erd{j}}\ket{x,u} \nonumber \\
&=& \sum_{j\in\Kk, u\in l_x} \alpha_t^\prime(\erd{j},x,u)\ket{\erd{j}} \left( U_J \ket{x,u} \right), \label{eq:elb1}
\end{eqnarray*}
\begin{eqnarray*}
U_J\Psi_t(x) 
&=& U_J \Lambda(\bi{\psi}) \Psi_t^\prime(x) \label{eq:elb2} \\
&=& U_J \Lambda(\bi{\psi}) \sum_{j\in\Kk, u\in l_x} \alpha_t^\prime(\erd{j},x,u)\ket{\erd{j}}\ket{x,u} \nonumber \\
&=& U_J \sum_{j\in\Kk , u\in l_x} \psi_j^\prime \alpha_t^\prime(\erd{j},x,u)\ket{x,u} \nonumber \\
&=&  \sum_{j\in\Kk , u\in l_x} \psi_j^\prime \alpha_t^\prime(\erd{j},x,u) U_J \ket{x,u} \nonumber \\
&=& \Lambda(\bi{\psi}) \sum_{j\in\Kk, u\in l_x} \alpha_t^\prime(\erd{j},x,u)\ket{\erd{j}} \left( U_J \ket{x,u} \right) \nonumber \\
&=& \Lambda(\bi{\psi}) U^{\prime}_J \Psi^{\prime}_t(x) \label{eq:elb3}.
\end{eqnarray*}
This relation holds for any $x$, so we conclude $\Psi_{t+1}(x) = \Lambda(\bi{\psi}) \Psi_{t+1}^{\prime}(x)$.} 
\end{proof}

For $x\in V(\Jk)$, we define the probability of ``$\Wtd = x$'' by
$P(\Wtd=x) \lequiv \| \Lambda(\bi{\psi})\Psi_t^{\prime}(x) \|^2$. 
Then it follows form Lemma \ref{lem:enlarging basis} that $P(\Wtd=x) = P(\Wt=x)$ for any $x\in V(\Jk)$.
For $\Wtd$, the information of the initial state is covered by $\Lambda(\bi{\psi})$.
In other words, for \textit{any} initial state of $\Wt$,
it is enough to consider the initial state $\sum_{j\in\Kk}\ket{\erd{j}}\ket{0,\epsilon_j}$ on $\Wtd$.
Consequently, the states of the quantum walk $\Wtd$ have a good symmetry, so we can treat the reduction of the walk.

Now we introduce $\Xtd$ as a reduction of $\Wtd$ on a half line.
Here $\Xtd$ is defined on a Hilbert space generated by the following new basis.
For all $l\in\{Up,Down\}$ and $x\in\bZ_+$,
\begin{eqnarray*}
&& \ket{Own,0,\epsilon} \lequiv \sum_{j\in\Kk} \ket{\erd{j},0,\er{j}},\\
&& \ket{Other,0,\epsilon} \lequiv \frac{1}{\sqrt{\kappa-1}}\sum_{j\in\Kk} \sum_{k\in\Kk\setminus\{j\}} \ket{\erd{k},0,\er{j}},\\
&& \ket{Own,x,l} \lequiv \sum_{j\in\Kk}\ket{\erd{j},\hr{j}(x),l}, \\
&& \ket{Other,x,l} \lequiv \frac{1}{\sqrt{\kappa-1}}\sum_{j\in\Kk} \sum_{k\in\Kk\setminus\{j\}} \ket{\erd{k},\hr{j}(x),l}.
\end{eqnarray*}
On this basis, we obtain the one-step time evolution. For $x\in\bZ_+$,
\begin{eqnarray*}
U_J^{\prime}\ :\
\begin{array}{l}
\ket{Own,0,\epsilon} \to a_\kappa \ket{Own,1,Down} +\sqrt{\kappa-1} b_\kappa \ket{Other,1,Down},\\
\ket{Other,0,\epsilon} \to \sqrt{\kappa-1}b_\kappa \ket{Own,1,Down} - a_\kappa \ket{Other,1,Down},\\
\ket{Own,x,Up} \to a\ket{Own,x-1,Up} + c\ket{Own,x+1,Down},\\
\ket{Other,x,Up} \to a\ket{Other,x-1,Up} + c\ket{Other,x+1,Down},\\
\ket{Own,x,Down} \to b\ket{Own,x-1,Up} + d\ket{Own,x+1,Down},\\
\ket{Other,x,Down} \to b\ket{Other,x-1,Up} + d\ket{Other,x+1,Down}.
\end{array}
\end{eqnarray*}
The subspace generated by this basis is invariant under the operation $U_J^\prime$.
Moreover the initial state of $\Wtd$ can be written as $\ket{Own,0,\epsilon}$.
Therefore we can write the evolution operator of $\Xtd$ as $U_H^*=F_H^* S_H^*$. The coin operator $F_H^*$ is defined by
\begin{eqnarray*}
&& F_H^* \lequiv 
\left[ \begin{array}{cc}
a_\kappa & \sqrt{\kappa-1}b_\kappa\\
\sqrt{\kappa-1}b_\kappa &-a_\kappa\\
\end{array} \right]
\otimes \ket{0}\bra{0} \otimes 1 + \sum_{x\in \bZ_+} I_2 \otimes \ket{x}\bra{x} \otimes C. 
\end{eqnarray*}
For $m\in \{Own,Other\}$, $l\in\{Up,Down\}$, the shift operator $S^*$ is defined by
\begin{eqnarray*}
&& S_H^*\ket{m,0,\epsilon} = \ket{m,1,Down}  ,\\
&& S_H^*\ket{m,1,l} = \left\{ 
\begin{array}{ll}
\ket{m,0,\epsilon} , & l = Up ,\\
\ket{m,2,Down} , & l = Down ,\\
\end{array} \right.  \\
&& S_H^*\ket{m,x,l} = \left\{ 
\begin{array}{ll}
\ket{m,x-1,Up} , & l = Up ,\\
\ket{m,x+1,Down} , & l = Down ,\\
\end{array} \right.   \ x\ge 2 .
\end{eqnarray*}
Throughout this paper, we put $\ket{Own}={}^T[1,0]$ and $\ket{Other}={}^T[0,1]$.
An expression of $\Xtd$ using weights is shown by Fig. \ref{fig:hlmat3} and Eqs. (\ref{eq:Q0})-(\ref{eq:IQ}) in Appendix. 

Let $\Psi_t^*(x)$ be the state of the quantum walk $\Xtd$.
Now we define for $r\in\Kk$,
\begin{eqnarray*}
\Lambda_r(\bi{\psi}) \lequiv
\left( \psi_r \bra{Own} + \sqrt{\kappa-1}\sum_{j\in\Kk\setminus\{r\}} \psi_j\bra{Other} \right)\otimes I_W.
\end{eqnarray*}
Then we introduce $\Xtr$ whose probability of ``$\Xtr=x$'' is defined by
\begin{eqnarray}
P(\Xtr=x) \lequiv  \| \Lambda_r(\bi{\psi})\Psi_t^*(x) \|^2. \label{eq:probxtr}
\end{eqnarray}
This probability is described by the state of $W_{t,\kappa}$ in the following, 
\begin{eqnarray*}
P(\Xtr=x) =
\left\{
\begin{array}{ll}
| \alpha_t(0,\er{r}) |^2   ,& x=0 ,\\
\| \Psi_t(\hr{r}(x)) \|^2  ,& \mathrm{otherwise.}\\
\end{array}
\right. 
\end{eqnarray*}
Hence the relation between the probabilities of ``$\Wt=\hr{r}(x)$'' and ``$\Xtr=x$'' is obtained as
\begin{eqnarray*}
P(\Wt=\hr{r}(x)) = 
\left\{ \begin{array}{ll} 
\sum\limits_{j\in\Kk}P(X_{t,j}=0) ,& x=0 ,\\
P(\Xtr=x) ,& \mathrm{otherwise.} \\
\end{array} \right. 
\end{eqnarray*}
Note that $\sum_{j\in\Kk} \sum_{x\in\{0\}\cup\bZ_+}P(X_{t,j}=x) = 1 $.

In Subsections \ref{sec:proof thm1} and \ref{sec:proof thm2}, we analyze $\Psi_t^*(x)$ by the generating function.

\subsection{Proof of Theorem \ref{thm:localization}} \label{sec:proof thm1}
We compute the limit state of $\Xtd$ from the generating function which is defined by
\begin{eqnarray*}
\tilde{\Psi}^*(x;z) \lequiv \sum_{t=0}^\infty \Psi_t^*(x)z^t 
\requiv \sum_{ \substack{l\in \{Own,Other\} \\ m\in\{Up,Down\} } }\tilde{\alpha}^*(l,x,m;z)\ket{l,x,m} . 
\end{eqnarray*}
From Appendix, we see that there exists $0<r_1<1$ so that for any $z$ with $|z|<r_1$,
\begin{eqnarray}
&&\tilde{\alpha}^*(Own,x,Up;z) = 
\left\{ \begin{array}{ll} 
-\frac{d}{ac}(\lambda(z)-az)(\mu(z)+a_\kappa) \Phi(x;z) , & x > 0 ,\\ 
-(\mu(z)+a_\kappa)\mu(z) \Phi(x;z) , & x=0 ,\end{array} \right.  \label{eq:gen1} \\
&&\tilde{\alpha}^*(Other,x,Up;z) = 
\left\{ \begin{array}{ll} 
-\frac{d\sqrt{\kappa-1}}{ac}(\lambda(z)-az)b_\kappa \Phi(x;z) , & x > 0,\\ 
-\sqrt{\kappa-1}b_\kappa\mu(z) \Phi(x;z) , & x=0 ,\end{array} \right.  \label{eq:gen2} \\
&&\tilde{\alpha}^*(Own,x,Down;z) = 
\left\{ \begin{array}{ll} 
-z(\mu(z)+a_\kappa) \Phi(x;z) , & x > 0 ,\\ 
0 , & x=0 ,\end{array} \right.  \label{eq:gen3} \\
&&\tilde{\alpha}^*(Other,x,Down;z) = 
\left\{ \begin{array}{ll} 
-z\sqrt{\kappa-1}b_\kappa \Phi(x;z) , & x > 0 ,\\ 
0 , & x=0 ,\end{array} \right.  \label{eq:gen4} \\
&&\Phi(x;z) = \left\{\frac{d\lambda(z)}{a}\right\}^{x-1} \frac{ w_+^2 w_-^2 \left(\eta_+(z)+\sqrt{\nu(z)} \right)\left(\eta_-(z)-\sqrt{\nu(z)} \right) }{4(1-c^2) (z^2-w_+^2)(z^2-w_-^2) } \label{eq:gen5} ,
\end{eqnarray}
where
\begin{eqnarray*}
&& \lambda(z) \lequiv \frac{\Delta z^2 + 1 - \sqrt{\Delta^2 z^4 + 2\Delta(1-2|a|^2)z^2 + 1}}{2dz} ,\\
&& \mu(z) \lequiv \frac{d\lambda(z)-\Delta z}{c}z ,\\
&& \nu(z) \lequiv (1+\Delta z^2)^2 - 4\Delta |a|^2 z^2 ,\\
&& \eta_\pm(z) \lequiv 2c \pm 1 \mp \Delta z^2 ,\\
&& w_\pm^2 \lequiv \mp \frac{c(1 \pm c)}{\Delta (|a|^2 - 1 \mp c)} ,\\
&& \Delta \lequiv ad-bc .
\end{eqnarray*}
Note that $|w_\pm^2| = \left| (1\pm c)/(\overline{1\pm c}) \right| = 1$.
From Cauchy's theorem, we have for $0<r<r_1<1$,
\begin{equation*}
\Psi_t^*(x) = \frac{1}{2\pi i} \oint_{|z|=r} \tilde{\Psi}^*(x;z)\frac{dz}{z^{t+1}}.
\end{equation*}
Therefore as $t\to \infty$
\begin{multline*}
-\Psi_t^*(x) \sim  {\rm Res}(\tilde{\Psi^*}(x;z),w_+)w_+^{-(t+1)} + {\rm Res}(\tilde{\Psi^*}(x;z),-w_+)(-w_+)^{-(t+1)} \\
+ {\rm Res}(\tilde{\Psi^*}(x;z),w_-)w_-^{-(t+1)} + {\rm Res}(\tilde{\Psi^*}(x;z),-w_-)(-w_-)^{-(t+1)},
\end{multline*}
where ${\rm Res}(f(z),w)$ is the residue of $f(z)$ for $z=w$.
Taking the residues of the generating function, we can compute $\Psi_t^* (x)$. After some calculations with Eq.(9) and $\Psi^*_t(x)$, 
the proof of Theorem 1 is complete. 

\subsection{Proof of Theorem \ref{thm:weak convergence}} \label{sec:proof thm2}
In order to prove Theorem \ref{thm:weak convergence},  
we calculate the Fourier transform of the generating function as 
$\hat{\tilde{\Psi}}^*(s;z)=\sum_x\tilde{\Psi}^*(x;z)e^{isx}$ by Eqs. (\ref{eq:gen1})-(\ref{eq:gen4}).
Then we obtain the characteristic function from the following relation
\begin{eqnarray}
E\left[e^{i\xi \Xtr} \right] 
&=& \sum_{x\in \bZ} \langle \Lambda_r(\bi{\psi})\Psi_t^*(x) , \Lambda_r(\bi{\psi})\Psi_t^*(x) \rangle e^{i\xi x} \nonumber \\
&=& \sum_{x,y\in \bZ} \langle \Lambda_r(\bi{\psi})\Psi_t^*(x) , \Lambda_r(\bi{\psi})\Psi_t^*(y) \rangle e^{i\xi x} \int_0^{2\pi} e^{ik(x-y)}\frac{dk}{2\pi} \nonumber \\
&=& \int_0^{2\pi} \left\{ \sum_{x,y\in \bZ} \langle \Lambda_r(\bi{\psi})\Psi_t^*(x) , \Lambda_r(\bi{\psi})\Psi_t^*(y) \rangle e^{ik(x-y)}e^{i\xi x} \right\} \frac{dk}{2\pi} \nonumber \\
&=& \int_0^{2\pi} \langle \Lambda_r(\bi{\psi})\hat{\Psi}_t^*(s) , \Lambda_r(\bi{\psi})\hat{\Psi}_t^*(s+\xi) \rangle \frac{ds}{2\pi} \label{eq:chara},
\end{eqnarray}
where $\langle\bi{u},\bi{v}\rangle$ is the inner product of vectors $\bi{u}$ and $\bi{v}$.

Now we write the Fourier transform of the generating function as 
\[ \hat{\tilde{\Psi}}^*(s;z)=\sum_{\scriptsize{\begin{matrix}l\in \{Own,Other\} \\ m\in\{Up,Down\}\end{matrix}}}
\hat{\tilde{\alpha}}^*(l,m;s;z)\ket{l,m}. \]
From Eqs. (\ref{eq:gen1})-(\ref{eq:gen4}), we have $\hat{\tilde{\Psi}}^*(s;z)$ as 
\begin{eqnarray*}
&&\hat{\tilde{\alpha}}^*(Own,Up;s;z) = \left(-\mu(z) + \frac{d}{ac}(\lambda(z)-az)\Phi_2(s;z) \right) (a_\kappa+ \mu(z)) \Phi_1(s;z) , \label{eq:fourier1}\\
&&\hat{\tilde{\alpha}}^*(Own,Down;s;z) = \left(-\mu(z) + \frac{d}{ac}(\lambda(z)-az)\Phi_2(s;z) \right) \sqrt{\kappa-1} b_\kappa \Phi_1(s;z) ,\label{eq:fourier2}\\ 
&&\hat{\tilde{\alpha}}^*(Other,Up;s;z) =  z( a_\kappa + \mu(z) )\Phi_1(s;z) \Phi_2(s;z),\label{eq:fourier3}\\
&&\hat{\tilde{\alpha}}^*(Other,Down;s;z) = z \sqrt{\kappa-1} b_\kappa \Phi_1(s;z) \Phi_2(s;z), \label{eq:fourier4}
\end{eqnarray*}
where
\begin{eqnarray*}
&& \Phi_1(s;z) \lequiv \frac{w_+^2 w_-^2}{4(1-c^2)} \frac{ \left(\eta_+(z)+\sqrt{\nu(z)} \right) \left(\eta_-(z)-\sqrt{\nu(z)} \right) }{(z^2-w_+^2)(z^2-w_-^2)} , \\
&& \Phi_2(s;z) \lequiv \frac{ e^{ik}\left( \zeta(s;z)-\sqrt{\nu(z)} \right) }{2\Delta(z-v_+(s))(z-v_-(s))} ,\\
&& \zeta(s;z) \lequiv 2ae^{-is}z - 1 - \Delta z^2 , \\
&& v_\pm(s) \lequiv \frac{ae^{-is}+\overline{a}\Delta e^{is} \pm \sqrt{(ae^{-is}+\overline{a}\Delta e^{is})^2 - 4\Delta } }{2\Delta} .
\end{eqnarray*}
Here we can rewrite $v_\pm(s)$ as
\begin{eqnarray*}
v_\pm(s) = e^{-i\rho}(|a|\cos{\gamma(s)} \pm \sqrt{|a|^2 \cos^2{\gamma(s)}-1}) = e^{-i\rho}e^{\pm i\theta(s)},
\end{eqnarray*}
where we take $\Delta=e^{2i\rho}$, $a=|a|e^{i\sigma}$, $\gamma(s) = s-\sigma + \rho$, $\cos{\theta(s)} = |a|\cos{\gamma(s)}$.
Note that $|v_\pm(s)|=1$. Now $||\hat{\tilde{\Psi}}^* (s;z)||^2<\infty$ for $0<|z|<r_1$, 
we can rewrite $\hat{\tilde{\Psi}}^*(s;z)=\sum_{t\geq 0}\hat{\Psi}^*_t(s)z^t$. 
So we have for $0<r<r_1$
\begin{eqnarray*}
\hat{\Psi}_t^*(s) = \frac{1}{2\pi i}\oint_{|z|=r}\hat{\tilde{\Psi}}(s;z)\frac{dz}{z^{t+1}} \label{eq:fourier} .
\end{eqnarray*}
Therefore we get the Fourier transform of the state $\hat{\Psi}_t^*(s)$ as follows:
\begin{multline}
-\hat{\Psi}_t^*(s) \sim  \psi_{w_+}(s)(w_+)^{-(t+1)} + \psi_{-w_+}(s)(-w_+)^{-(t+1)} \\
\hspace{15mm} + \psi_{w_-}(s)(w_-)^{-(t+1)} + \psi_{-w_-}(s)(-w_-)^{-(t+1)} \\
\hspace{15mm} + \psi_{v_+}(s) (v_+(s))^{-(t+1)} + \psi_{v_-}(s) (v_-(s))^{-(t+1)}  , \label{eq:res}
\end{multline}
where $\psi_{\pm w_\pm}(s) = {\rm Res}(\hat{\tilde{\Psi}}(s;z);\pm w_\pm)$ and $\psi_{v_\pm}(s) = {\rm Res}(\hat{\tilde{\Psi}}(s;z);v_\pm(s))$.

Finally we compute the characteristic function by Eqs.(\ref{eq:chara}) and (\ref{eq:res}). Now we have 
\begin{eqnarray*}
&& \int_0^{2\pi} \left(\|\Lambda_r(\bi{\psi}) \psi_{w_+}(s)\|^2 + \|\Lambda_r(\bi{\psi}) \psi_{-w_+}(s)\|^2 \right) \frac{ds}{2\pi} = C_p^r , \label{eq:chara1} \\
&& \int_0^{2\pi} \left(\|\Lambda_r(\bi{\psi}) \psi_{w_-}(s)\|^2 + \|\Lambda_r(\bi{\psi}) \psi_{-w_-}(s)\|^2 \right) \frac{ds}{2\pi} =C_m , \label{eq:chara2}\\
&& \int_0^{2\pi} \left( \langle \Lambda_r(\bi{\psi})\psi_{w_+}(s) , \Lambda_r(\bi{\psi})\psi_{-w_+}(s) \rangle 
+ \langle \Lambda_r(\bi{\psi})\psi_{-w_+}(s) , \Lambda_r(\bi{\psi})\psi_{w_+}(s) \rangle \right) \frac{ds}{2\pi} = 0 , \label{eq:chara3}\\
&& \int_0^{2\pi} \left( \langle \Lambda_r(\bi{\psi})\psi_{w_-}(s) , \Lambda_r(\bi{\psi})\psi_{-w_-}(s) \rangle
+ \langle \Lambda_r(\bi{\psi})\psi_{-w_-}(s) , \Lambda_r(\bi{\psi})\psi_{w_-}(s) \rangle \right) \frac{ds}{2\pi} = 0 . \label{eq:chara4}
\end{eqnarray*}
Noting that 
$e^{i(t+1)\theta(s+\xi/t)}=e^{i(t+1)\theta(s)+i\xi h(s)+o(t^{-1})}$, where $h(s)\lequiv d\theta(s)/ds$, 
the above equation and Eq.(\ref{eq:chara}) with the Riemann-Lebesgue lemma imply
\begin{equation}
\lim_{t\to \infty} E\left[e^{i\xi \Xtr/t} \right] = C_p^r + C_m + \int_0^{2\pi} e^{-i\xi h(s)} p(s)\frac{ds}{2\pi} + \int_0^{2\pi} e^{i\xi h(s)} q(s)\frac{ds}{2\pi}, \label{eq:chara5}
\end{equation}
where 
$p(s)=\| \Lambda_r(\bi{\psi})\psi_{v_+}(s)\|^2$ and $q(s)=\| \Lambda_r(\bi{\psi})\psi_{v_-}(s)\|^2$. 
Moreover, from a change of variable for last two terms in Eq. (\ref{eq:chara5}), we have 
\begin{equation*}
\int_0^{2\pi} \left( e^{-i\xi h(s)} p(s) + e^{i\xi h(s)} q(s) \right) \frac{ds}{2\pi} 
=\int_0^\infty e^{i\xi x} w(x) f_K(x) dx.
\end{equation*}
After some calculations for $w(x)$ with $p(s)$ and $q(s)$, we have the desired conclusion.

\section{Summary and discussions} \label{sec:summary}
We introduced a quantum walk with an enlarged basis to consider a reduction of quantum walks with arbitrary initial state.
This method is based on an idea canceling the asymmetry caused from initial state by a new tensor product.
From our results in this paper, we discuss two interesting points.
First, we found the oscillating probability as localization.
From Theorem \ref{thm:localization}, the oscillatory term is expressed by $L_c^r(x,t)$.
We can see that this term vanishes with some initial states or local coins.
For example, we consider $\Vt$, which is a quantum walk on $\bT_\kappa$ with local coin operators $G_\kappa$ with additional complex phase $\tc$ at the origin.
If $\tc=1$ Corollary \ref{cor:phi=0} implies that the oscillation does not occur with arbitrary initial state.
Also if the initial state is symmetric, the walk is reduced on a half line. 
Then it follows from Corollary \ref{cor:half line} that no oscillatory behavior arise with any complex phase $\tc$.
Thus the initial state and differences on complex phase of local coins are important factors for the oscillatory behavior on localization.
Especially in quantum walks on the one-dimensional lattice with homogeneous local coins, localization does not occur \cite{konno01,konno02,grimmett01}.
If localization occurs with perturbations of local coin operators on the one-dimensional lattice,
there seems to be a condition that an oscillatory behavior arises in localization.
Second, $\Wt$ has the scaling order $t$ and the limit measure has the density function $f_K(x)$ 
which is a half-line version of one appearing in the quantum walk on a line \cite{konno01,konno02,grimmett01}.
This is a typical property of quantum walks \cite{chisaki01,miyazaki01,segawa01}.
To show the universality of the limit theorems for quantum walks is one of the interesting future's problems.

\par
\
\par\noindent
\noindent
{\bf Acknowledgments.}
\noindent We thank Noriko Saitoh and Jun Kodama for useful discussions. N.K. is supported by the
Grant-in-Aid for Scientific Research (C) (No. 21540118). 

\par
\
\par

\begin{small}
\bibliographystyle{jplain}

\end{small}

\noindent\\
\noindent\\
\noindent\\

\appendix 
\section*{Appendix}
\begin{figure}[h]
\centering
\resizebox{0.4\textwidth}{!}{%
  \includegraphics{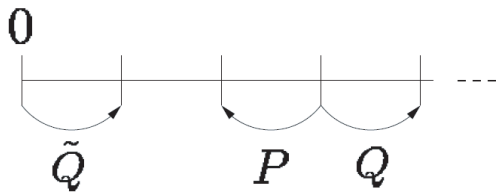}
}
\caption{Quantum walk with enlarged bases $\Xtd$ on a half line}
\label{fig:hlmat3}       
\end{figure}

We calculate the generating function of $\Psi_t^*(x)$ by using the method in \cite{oka01}.
To simplify notations, for $l\in \{Own,Other\}$, we denote $|l,0,\epsilon\rangle=|l,0,Up\rangle$ and construct $|l,0,Down\rangle$ 
as a dummy base, which always has value $0$ as its amplitude, so that the local coin operator on the origin has $2\times 2$ matrix. 
To indicate the evolution operator of the walk, we use an expression using weights (see Fig.\ref{fig:hlmat3}), where 
\begin{eqnarray}
&& \tilde{Q} \lequiv
\left[ \begin{array}{cc} a_\kappa& \sqrt{\kappa-1}b_\kappa\\ \sqrt{\kappa-1}b_\kappa&-a_\kappa\\ \end{array} \right]
\otimes
\left[ \begin{array}{cc} 0&0\\ 1&0\\ \end{array} \right] , \label{eq:Q0}\\ 
&& P =
I_2
\otimes
\left[ \begin{array}{cc} a&b\\ 0&0\\ \end{array} \right] \ ,\ 
Q =
I_2
\otimes
\left[ \begin{array}{cc} 0&0\\ c&d\\ \end{array} \right] , \label{eq:PQ}\\
&& \Psi_0^*(0) = 
\left[ \begin{array}{c} 1\\ 0\\ \end{array} \right] 
\otimes
\left[ \begin{array}{c} 1\\ 0\\ \end{array} \right] . \label{eq:IQ}
\end{eqnarray}
We define the generating function for the state by
\begin{eqnarray*}
\tilde{\Psi}^*(x;z) \lequiv \sum_{t=0}^\infty \Psi_t^*(x)z^t .
\end{eqnarray*}
In order to compute $\tilde{\Psi}^*(x;z)$, 
we first define the transition amplitude $\tilde{\Xi}(0\to x;\tau)$ as the weight of all paths starting from $0$ ending at $x$ after $\tau$ steps,
and $\Xi(0\to x;\tau)$ as the weight of all paths on another walk defined by $Q_0^\prime \lequiv Q$.
For example, $\tilde{\Xi}(0\to 2;4)=QPQ\tilde{Q} + PQQ\tilde{Q} + Q\tilde{Q}P\tilde{Q}$ and $\Xi(0\to 2;4)=QPQQ + PQQQ + QQPQ$. 
From $\tilde{\Xi}(0\to 0;\tau_1)$ and $\Xi(0\to x-1;\tau_2)$, we can obtain $\tilde{\Xi}(0\to x;\tau_1+\tau_2+1)$ as Fig.\ref{fig:Xi}.
Then we get $\tilde{\Psi}^*(x;z)$ from the generating function for $\tilde{\Xi}(0\to x;\tau)$.
\begin{figure}[t]
 \centering
 \resizebox{0.55\textwidth}{!}{%
   \includegraphics{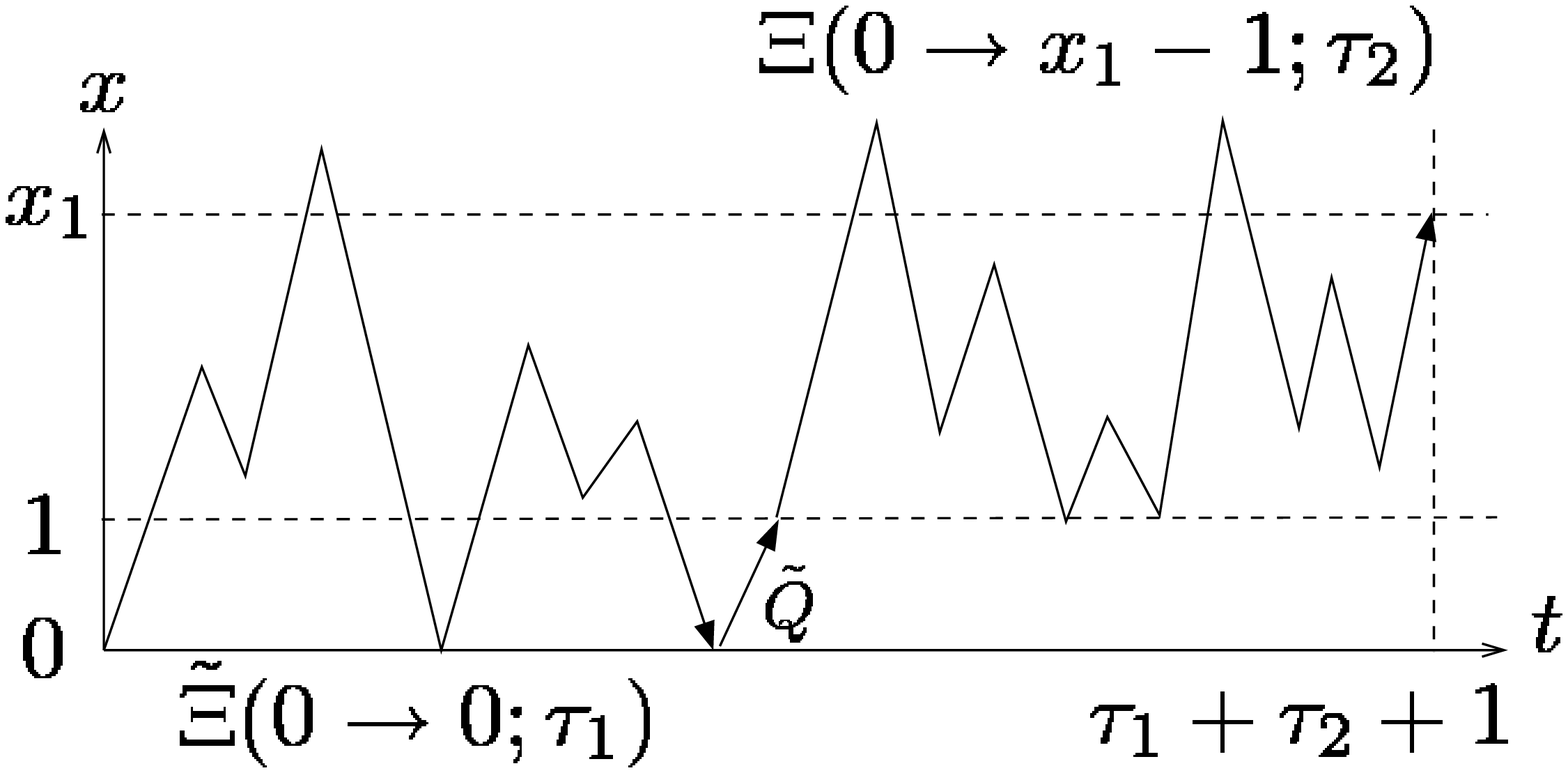}
 }
 \caption{$\tilde{\Xi}(0\to x_1;\tau_1+\tau_2+1)$}
 \label{fig:Xi}       
\end{figure}

We now calculate the generating function for $\Xi(0\to x;\tau)$.
Since the first operator should be $Q$ on the half line, the weights of paths form $Q\cdots Q$ or $P\cdots Q$.
So we express $\Xi(0\to x;\tau)$ as a linear combination of $Q$ and $R$:
\begin{eqnarray*}
\Xi(0\to x;\tau)=  b^q(0\to x;\tau)Q + b^r(0\to x;\tau)R  + \delta_0(x)\delta_0(\tau) I_2 \otimes I_2
\end{eqnarray*}
where $R\lequiv I_2 \otimes \left[ \begin{array}{cc} c&d\\ 0&0\\ \end{array} \right] $ and we define $b^q(0\to x;0)=b^r(0\to x;0)=0$.
The generating function for $\Xi(0\to x;\tau)$ is defined by
\begin{eqnarray*}
\sum_{\tau=0}^\infty\Xi(0\to x;\tau)z^\tau=B^q(0\to x;z)Q + B^r(0\to x;z)R + \delta_0(x) I_2 \otimes I_2,
\end{eqnarray*}
with $B^q(0\to x;z) = \sum_{\tau=0}^\infty b^q(0\to x;\tau)z^\tau$ and $B^r(0\to x;z) = \sum_{\tau=0}^\infty b^r(0\to x;\tau)z^\tau$.
Since the left-hand tensor product of $P$ and $Q$ is $I_2$, the generating function for $\Xi(0\to x;\tau)$ corresponds to the result in \cite{oka01}, i.e., 
for sufficiently small $z$,
\begin{eqnarray}
&& B^q(0\to x;z) = \left\{ \frac{d}{a}\lambda(z) \right\}^x \frac{1}{d} ,\  x\geq 1 , \nonumber \\
&& B^q(0\to 0;z) = 0 ,  \nonumber \\
&& B^r(0\to x;z) = \left\{ \frac{d}{a}\lambda(z) \right\}^x \frac{\lambda(z) - az}{acz} ,\  x\geq 0 , \nonumber \\
&& \lambda(z) = 
\frac{\Delta z^2 + 1 - \sqrt{\Delta^2 z^4 + 2\Delta(1-2|a|^2)z^2 + 1}}{2dz}  . \label{eq:lambda}
\end{eqnarray}
Here we take $\lambda(z)$ for the smaller solution of the absolute value of
\begin{eqnarray}
\lambda^2(z) - \frac{1}{d}\left( \Delta z + \frac{1}{z} \right) \lambda(z) + \frac{a}{d} = 0. \label{eq:lambda-master}
\end{eqnarray}
Note that for sufficiently small $z$ we can write $\lambda(z)$ by Eq. (\ref{eq:lambda}).
Moreover since $|a/d|=1$, we can take $r_0 < 1$ such that $|\lambda(z)|<1$ for $|z|<r_0$. 
Next we calculate the generating function for $\tilde{\Xi}(0\to 0;\tau)$. 
To do so, we introduce a new notation $\tilde{\Xi}(0\to 0;\tau;n)$ as the weight of all paths starting from the origin reaching the origin 
$n$ times before ending at the origin at time $\tau$.
Now we consider $\tilde{\Xi}(0\to 0;\tau;0)$.
For $\tau\geq 2$, we obtain $\tilde{\Xi}(0\to 0;\tau;0)$ as  (see Fig.\ref{fig:Xi0})
\begin{figure}[t]
 \centering
 \resizebox{0.45\textwidth}{!}{%
   \includegraphics{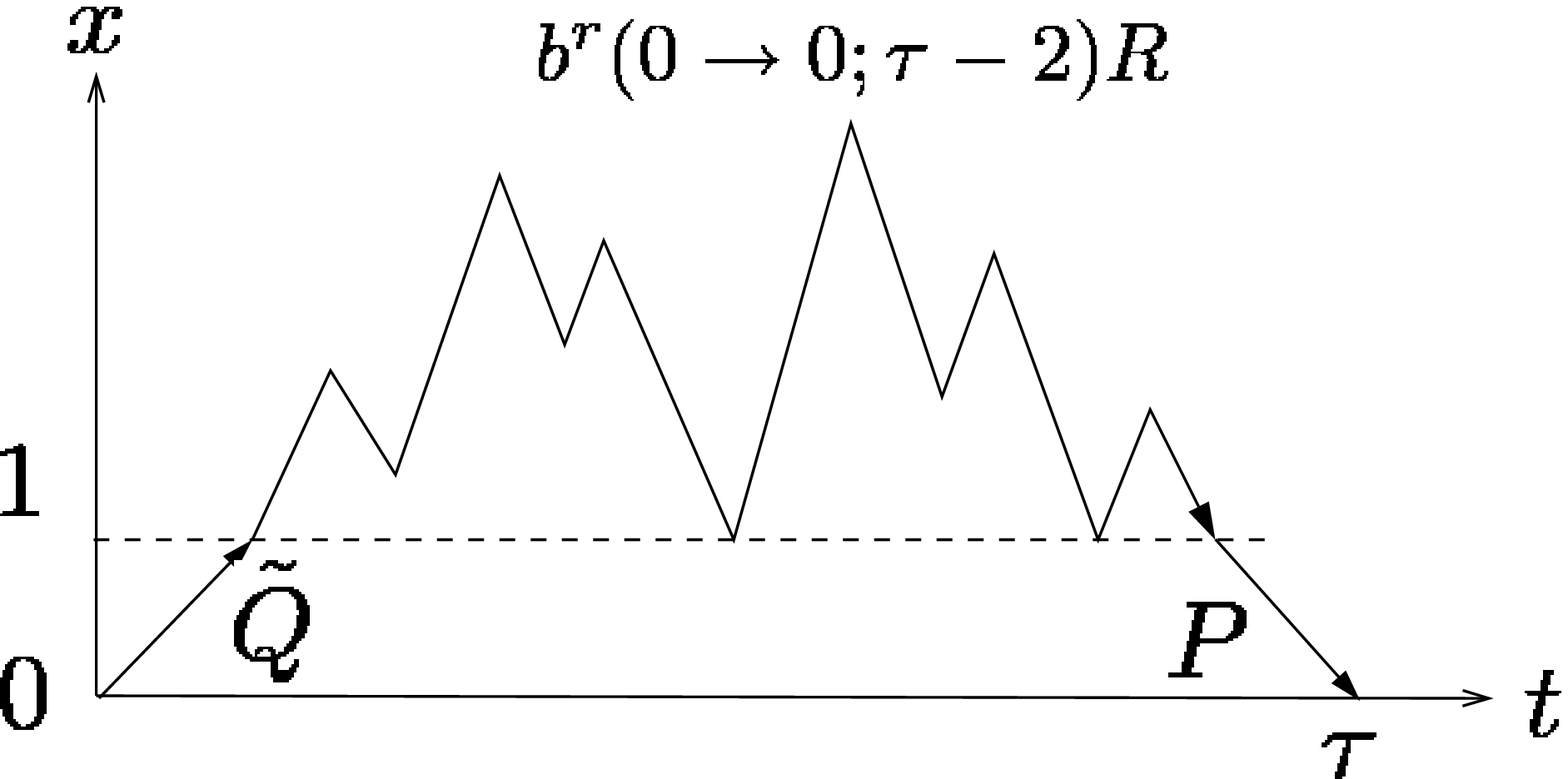}
 }
 \caption{$\tilde{\Xi}(0\to 0;\tau;0)$}
 \label{fig:Xi0}       
\end{figure}
\begin{eqnarray*}
\tilde{\Xi}(0\to 0;\tau;0) &=& (1-\delta_2(\tau))P \{ b^r(0\to0;\tau-2)R \} \tilde{Q} + \delta_2(\tau)P\tilde{Q} \\
&=& \{(1-\delta_2(\tau))abb^r(0\to 0;\tau-2)+\delta_2(\tau)b \} \tilde{R},
\end{eqnarray*}
where $\tilde{R}\lequiv\left[ \begin{array}{cc} a_\kappa&\sqrt{\kappa-1}b_\kappa\\ \sqrt{\kappa-1}b_\kappa&-a_\kappa\\ \end{array} \right] 
	\otimes \left[ \begin{array}{cc} 1 &0\\ 0&0\\ \end{array} \right] $ 
and for $\tau<2$ we define $\tilde{\Xi}(0\to 0;\tau;0)=0$.
Therefore we get the generating function for $\tilde{\Xi}(0\to 0;\tau;0)$ as
\begin{eqnarray*}
\sum_{\tau=0}^\infty \tilde{\Xi}(0\to 0;\tau;0)z^\tau 
= (adB^r(0\to 0;z) + b)z^2\tilde{R} 
= B^{\tilde{r}}(0\to 0;z;0)\tilde{R}.
\end{eqnarray*}
Similarly, for $\tau\geq 4$ we have $\tilde{\Xi}(0\to 0;\tau;1)$ as
\begin{multline*}
\tilde{\Xi}(0\to 0;\tau;1) = \sum_{\tau_1+\tau_2+4=\tau}\{(1-\delta_2(\tau_1))abb^r(0\to 0;\tau_1)+\delta_2(\tau_1)b\}\tilde{R} \\ 
	\times \{(1-\delta_2(\tau_2))abb^r(0\to 0;\tau_2)+\delta_2(\tau_2)b\}\tilde{R},
\end{multline*}
and for $\tau<4$ we define $\tilde{\Xi}(0\to 0;\tau;1)=0$.
Thus the generating function for $\tilde{\Xi}(0\to 0;\tau;1)$ is obtained by
\begin{eqnarray*}
\sum_{\tau=0}^\infty \tilde{\Xi}(0\to 0;\tau;1)z^\tau 
= \{(adB^r(0\to 0;z) + b)z^2\}^2\tilde{R}_I 
= B^{\tilde{r}_I}(0\to 0;z;1)\tilde{R}_I,
\end{eqnarray*}
where $\tilde{R}_I\lequiv I_2 \otimes \left[ \begin{array}{cc} 1&0\\ 0&0\\ \end{array} \right] $. 
Recursively we have the following formulae: for $n\geq 0$,
\begin{eqnarray}
&& B^{\tilde{r}} (0\to 0;z;n) = \left( \frac{1+(-1)^n}{2} \right) \{ (adB^r(0\to 0;z)+b)z^2\}^{n+1}, \label{eq:tr}\\
&& B^{\tilde{r}_I} (0\to 0;z;n) = \left( \frac{1+(-1)^{n+1}}{2} \right) \{(adB^r(0\to 0;z)+b)z^2\}^{n+1}. \label{eq:tri}
\end{eqnarray}
From Eqs. (\ref{eq:tr}) and (\ref{eq:tri}), we get the generation function for $\tilde\Xi(0\to 0;\tau)$ by summing over $n$.
Here $(dB^r(0\to 0;z)-\Delta z)z^2 = (d\lambda(z)-\Delta z)z/c$, so we see that for $z$ with $|z|< r_1 \equiv {\rm min}(|c|,r_0)$,
\begin{eqnarray*}
|(d\lambda(z)-\Delta z)z/c|^2 \leq (|d\lambda(z)|^2 + |z|^2)|z/c|^2 < |d|^2 + |c|^2 = 1.
\end{eqnarray*}
Therefore for $z$ such that $|z|<r_1$,
\begin{eqnarray*}
&& \sum_{\tau=0}^\infty \tilde{\Xi}(0\to 0;\tau)z^\tau = B^{\tilde{r}}(0\to 0;z)\tilde{R} + B^{\tilde{r}_I}(0\to 0;z)\tilde{R}_I + I_2 \otimes I_2, \\
&& B^{\tilde{r}}(0\to 0;z) = \sum_{n=0}^\infty B^{\tilde{r}}(0\to 0;z;n)= \frac{(d\lambda(z)-\Delta z)z/c}{1-\{\tc (d\lambda(z)-\Delta z)z/c \}^2} , \\
&& B^{\tilde{r}_I}(0\to 0;z) = \sum_{n=0}^\infty B^{\tilde{r}_I}(0\to 0;z;n) = \frac{\{(d\lambda(z)-\Delta z)z/c\}^2}{1-\{(d\lambda(z)-\Delta z)z/c \}^2}.
\end{eqnarray*} 
For $x\geq 1$, $\tilde{\Xi}(0\to x;\tau)$ is written by $\tilde{\Xi}(0\to 0;\tau)$ and $\Xi(0\to x;\tau)$ (see Fig.\ref{fig:Xi}) as
\begin{eqnarray*}
\tilde{\Xi}(0\to x;\tau)  = \sum_{\tau_1+\tau_2+1=\tau}\Xi(0\to x-1;\tau_2) \tilde{Q} \tilde{\Xi}(0\to 0;\tau_1) + \delta_0(\tau)\delta_0(x)I_2 \otimes I_2 .
\end{eqnarray*}
From the generating function for $\tilde{\Xi}(0\to 0;\tau)$ and $\Xi(0\to x;\tau)$,
we can compute the generating function for $\tilde{\Xi}(0\to x;\tau)$ as follows: for $x\geq 1$,
\begin{eqnarray*}
&&\sum_{\tau=0}^\infty \tilde{\Xi}(0\to x;\tau)z^\tau 
= \{  B^q(0\to x-1;z)Q + B^r(0\to x-1;z)R + \delta_1(x)I_2\otimes I_2 \} \tilde{Q}z \nonumber \\
&& \hspace{40mm}  \times \{ B^{\tilde{r}}(0\to0;z)\tilde{R} + B^{\tilde{r}_I}(0\to0;z)\tilde{R}_I + I_2\otimes I_2 \} + \delta_0(x)I_2 \otimes I_2.
\end{eqnarray*}
Then we obtain the generating function $\tilde{\Psi}^*(x;z)$ as follows:
\begin{eqnarray*}
\tilde{\Psi}^*(x;z) = \sum_{\tau=0}^\infty \tilde{\Xi}(0\to x;\tau)z^\tau \Psi_0^*(x) .
\end{eqnarray*}

\end{document}